\newtheorem{thm}{Theorem}[section]
\newtheorem{lemma}[thm]{Lemma}
\newtheorem{cor}[thm]{Corollary}
\newtheorem{remark}[thm]{Remark}
\newtheorem{example}[thm]{Example}
\numberwithin{equation}{section}
\def\bM{\mathbb{M}}
\def\bP{\mathbb{P}}
\def\bH{\mathbb{H}}
\def\cD{\mathcal{D}}
\def\bN{\mathbb{N}}
\def\bR{\mathbb{R}}
\def\bC{\mathbb{C}}
\def\HS{\mathrm{HS}}
\def\WYD{\mathrm{WYD}}
\def\<{\langle}
\def\>{\rangle}
\def\Tr{\mathrm{Tr}\,}
\def\Im{\mathrm{Im}\,}
\def\cH{\mathcal{H}}
\def\ffi{\varphi}
\def\eps{\varepsilon}
\def\cA{\mathcal{A}}
\def\cU{\mathcal{U}}
\def\pTr{\mathrm{Tr}_2\,}
\def\sym{\mathrm{sym}}
\def\cI{\mathcal{I}}
\begin{document}
\baselineskip=16pt
\allowdisplaybreaks

\ \vskip 1cm 
\centerline{\LARGE Convexity of quasi-entropy type functions:}
\medskip
\centerline{\LARGE Lieb's and Ando's convexity theorems revisited}
\bigskip
\bigskip
\centerline{\Large
Fumio Hiai\footnote{E-mail: hiai.fumio@gmail.com}
and D\'enes Petz\footnote{E-mail: petz@math.bme.hu}}

\medskip
\begin{center}
$^1$\,Tohoku University (Emeritus), \\
Hakusan 3-8-16-303, Abiko 270-1154, Japan
\end{center}
\begin{center}
$^2$\,Alfr\'ed R\'enyi Institute of Mathematics, \\ H-1364 Budapest,
POB 127, Hungary
\end{center}

\bigskip
\begin{abstract}
Given a positive function $f$ on $(0,\infty)$ and a non-zero real parameter $\theta$, we
consider a function $I_f^\theta(A,B,X)=\Tr X^*(f(L_AR_B^{-1})R_B)^\theta(X)$ in three
matrices $A,B>0$ and $X$. In the literature $\theta=\pm1$ has been typical. The concept
unifies various quantum information quantities such as quasi-entropy, monotone metrics,
etc. We characterize joint convexity/concavity and monotonicity properties of the function
$I_f^\theta$, thus unifying some known results for various quantum quantities.

\medskip\noindent
{\it 2010 Mathematics Subject Classification.}
Primary 81P45; Secondary 54C70.

\medskip\noindent
{\it Key words and phrases:}
WYD skew information, quasi-entropy, monotone metric, metric adjusted skew information,
quantum $\chi^2$-divergence, joint convexity, monotonicity, operator monotone function,
operator convex function.
\end{abstract}

\section*{Introduction}

The Wigner-Yanase-Dyson (WYD) skew information is an old yet new subject having a somewhat
complicated history since its appearance in the paper \cite{WY} in 1963. The first
fundamental achievements among many things related to the WYD skew information are joint
concavity (also joint convexity) results of Lieb \cite{Li} (the so-called WYDL concavity)
and their equivalent formulations of Ando \cite{An}. The WYDL concavity in the context of
general von Neumann algebras was obtained by Araki \cite{Ar1} in order to prove joint
convexity of the relative entropy, and was further extended by Kosaki \cite{Ko} by means
of interpolation method. The notion of quasi-entropies, extending the relative entropy,
was introduced in \cite{Pe1,Pe2}, and its monotonicity and joint convexity properties
were shown there. A quasi-entropy $S_f^X(\rho\|\sigma)$ for states $\rho,\sigma$ with a
reference operator $X$ is associated with a real function $f$ on $(0,\infty)$, and
operator monotony (or operator convexity) of $f$ is essential in \cite{Pe1,Pe2} as well as
in \cite{Ko}. Moreover, it was proved in \cite{Pe3} that there is a one-to-one
correspondence between the monotone metrics (the quantum version of the Fisher metric in
classical probability) on Riemannian manifolds of positive density matrices and the
(symmetric) operator monotone functions on $(0,\infty)$. A remarkable point here is that
the formula of monotone metrics and that of quasi-entropies are very similar and indeed
they are in dual form (see Subsections 1.2 and 1.3 below). Quasi-entropies and monotone
metrics are among the most important quantities in quantum information and quantum
information geometry. More recent quantum quantities such as the quantum covariance in
\cite{Pe4}, the metric adjusted skew information (generalizing the WYD skew information)
in \cite{Ha2,CH} and the quantum $\chi^2$-divergence in \cite{TKRWV,Ha3} can be
reformulated by quasi-entropies (see Subsection 1.4 and \cite{HP3}).

In the present paper, in the matrix algebra setting (or in finite-dimensional quantum
systems) we deal with the three-variable function
$$
I_f^\theta(A,B,X):=\Tr X^*(f(L_AR_B^{-1})R_B)^\theta(X)
$$
of positive definite matrices $A,B$ and a general matrix $X$ associated with a function
$f>0$ on $(0,\infty)$ and a non-zero real parameter $\theta$, where $L_A$ and $R_B$ are
the left and the right multiplication operators by $A,B$ on matrices. This function unifies
all the quantum quantities mentioned above with particular choices of $\theta$ (typically
$\theta=\pm1$) and of $A,B$ (sometimes $A=B$) as described in Section 1. Indeed, Section 1
may be a concise survey on important quantities in quantum information theory started with
the WYD skew information. In Section 2 we consider various properties concerning joint
convexity as well as joint concavity of the function $I_f^\theta(A,B,X)$ in three variables
$(A,B,X)$ or in two variables $(A,B)$. The main theorem (Theorem \ref{T-2.1}) clarifies
what conditions of $f$ and $\theta$ are sufficient and/or necessary for $I_f^\theta$ to
have those properties. Operator monotony of $f$ shows up and also a possible range of
$\theta$ is rather restricted. The proof is divided into several steps and the results on
operator log-convex/concave functions in \cite{AH} play an essential role in some places.
In Section 3 we consider monotonicity properties of $I_f^\theta$ and show that they are
equivalent to corresponding convexity/concavity properties in Section 2. Furthermore, in
Sections 3 and 4, (joint) convexity properties of the quasi-entropy, the metric adjusted
skew information and the quantum $\chi^2$-divergence are characterized by operator
convexity of the associated function $f$. In this way, we strengthen and unify some known
results on convexity/concavity and monotonicity of several quantum quantities into
characterization (or if and only if) theorems.

\section{Definitions and preliminaries}

For each $n\in\bN$, the $n\times n$ complex algebra is denoted by $\bM_n$, the set of
$n\times n$ Hermitian matrices by $\bH_n$, the set of $n\times n$ positive semidefinite
matrices by $\bM_n^+$, and the set of $n\times n$ positive definite matrices by $\bP_n$.
The usual trace on $\bM_n$ is denoted by $\Tr$. A density matrix is a matrix
$\rho\in\bM_n^+$ with $\Tr\rho=1$. We write $\cD_n$ for the set of $n\times n$ positive
definite density matrices, i.e., $\cD_n:=\{\rho\in\bP_n:\Tr\rho=1\}$. We always consider
$\bM_n$ as a Hilbert space with the {\it Hilbert-Schmidt inner product}
$\<X,Y\>_\HS:=\Tr X^*Y$, $X,Y\in\bM_n$. For any $A\in\bM_n^+$ the {\it left} and the
{\it right multiplications} $L_A$ and $R_A$ are defined as $L_AX:=AX$ and $R_AX:=XA$ for
$X\in\bM_n$, which are commuting positive operators on the Hilbert space
$(\bM_n,\<\cdot,\cdot\>_\HS)$.

For any real function $f$ on $(0,\infty)$ and for every $A,B\in\bP_n$ define a linear
operator $J_{A,B}^f$ on $\bM_n$ by $J_{A,B}^f:=f(L_AR_B^{-1})R_B$ via functional calculus;
more explicitly,
$$
J_{A,B}^f(X)=f(L_AR_B^{-1})R_BX
=\sum_{i=1}^k\sum_{j=1}^lf(\alpha_i\beta_j^{-1})\beta_jP_iXQ_j,
\qquad X\in\bM_n,
$$
where $A=\sum_{i=1}^k\alpha_iP_i$ and $B=\sum_{j=1}^l\beta_jQ_j$ are the spectral
decompositions of $A$ and $B$. In particular, $J_{A,A}^f$ is denoted by $J_A^f$ for short.
Throughout the paper, unless otherwise stated, $f$ is assumed to be strictly positive,
i.e., $f(x)>0$ for all $x\in(0,\infty)$. Then it is immediate to see that $J_{A,B}^f$ is
positive and invertible on $(\bM_n,\<\cdot,\cdot\>_\HS)$ for every $A,B\in\bP_n$. For an
arbitrary real number $\theta$ one can define a three-variable function $I_f^\theta(A,B,X)$
on $\bP_n\times\bP_n\times\bM_n$ by
\begin{equation}\label{F-1.1}
I_f^\theta(A,B,X):=\<X,(J_{A,B}^f)^{-\theta}(X)\>_\HS,
\qquad A,B\in\bP_n,\ X\in\bM_n.
\end{equation}
With the spectral decompositions of $A,B$ as above, $I_f^\theta(A,B,X)$ is more explicitly
written as
\begin{equation}\label{F-1.2}
I_f^\theta(A,B,X)=\sum_{i=1}^k\sum_{j=1}^l(f(\alpha_i\beta_j^{-1})\beta_j)^{-\theta}
\Tr X^*P_iXQ_j.
\end{equation}
When $\theta=0$, $I_f^\theta(A,B,X)$ is reduced to the function $\<X,X\>_\HS$ that is
independent of $A,B$, so we shall always assume that $\theta$ is non-zero.

Our aim of the present paper is to clarify or characterize when the three-variable function
$I_f^\theta(A,B,X)$ is jointly convex or concave in three variables $(A,B,X)$ or in two
variables $(A,B)$. Convexity/concavity properties of this function have been considered by
several authors in its special cases from different viewpoints. Important special cases are
briefly surveyed in the rest of this section, which motivated (also justify) our
consideration of the function $I_f^\theta$ with parameter $\theta$.

\subsection{WYD skew information and WYDL concavity}

The famous {\it Wigner-Yanase-Dyson} ({\it WYD}\,) {\it skew information} introduced in
\cite{WY} is
$$
I_\rho^\WYD(p,K):=-{1\over2}\Tr[\rho^p,K]\,[\rho^{1-p},K]
$$
for $\rho\in\cD_n$, $K\in\bH_n$ and $p\in(0,1)$, where $[X,Y]:=XY-YX$, the commutator.
In his celebrated paper \cite{Li} Lieb proved that
\begin{equation}\label{F-1.3}
(A,B)\in\bM_n^+\times\bM_n^+\longmapsto\Tr X^*A^pXB^q
\end{equation}
is jointly concave for any $X\in\bM_n$ when $p,q\ge0$ and $p+q\le1$. He also proved joint
convexity of $\Tr X^*A^pXB^q$ in three variables $(A,B,X)$ when $p,q\le0$ and $p+q\ge-1$,
and that in two variables $(A,B)$ when $-1\le p,q\le0$. Since
$$
I_\rho^\WYD(p,K)=\Tr\rho K^2-\Tr K\rho^pK\rho^{1-p},
$$
Lieb's concavity settled the convexity question of $\rho\mapsto I_\rho^\WYD(p,K)$, so it 
is also called the {\it WYDL concavity}. For power functions $f(x):=x^\alpha$ on
$(0,\infty)$ with $\alpha\in\bR$, one has
$$
I_f^\theta(A,B,X)=\Tr X^*A^{-\alpha\theta}XB^{-(1-\alpha)\theta}.
$$
For any $p,q\in\bR$ with $p+q\ne0$ there are unique $\alpha,\theta\in\bR$ such that
$-\alpha\theta=p$ and $-(1-\alpha)\theta=q$, so the function $I_f^\theta$ covers trace
functions in \eqref{F-1.3}. On the other hand, in \cite{An} Ando deterimined the range of
real parameters $p,q$ for which the map $(A,B)\in\bP_n\times\bP_n\mapsto A^p\otimes B^q$
is jointly concave (respectively, jointly convex) with respect to the positive
semidefiniteness order. As is well known (see \cite[Proof of 4.3.3]{Bh2},
\cite[Remark 2.6]{CL}) that Ando's convexity/concavity is equivalent to Lieb's, that is,
convexity and concavity of $(A,B)\mapsto\Tr X^*A^pXB^q$ are equivalent to those of
$(A,B)\mapsto A^p\otimes B^q$, respectively.

The WYDL concavity was extended by Araki \cite{Ar1} to the general von Neumann algebra
setting to show joint convexity of the relative entropy, and was further extended by
Kosaki \cite{Ko} based on interpolation theory. Indeed, Kosaki \cite{Ko} proved joint
concavity of
\begin{equation}\label{F-1.4}
(\ffi,\psi)\in M_*^+\times M_*^+\longmapsto
\<x\xi_\psi,f(\Delta_{\ffi,\psi})(x\xi_\psi)\>
\end{equation}
for every $x\in M$ and for every operator monotone function $f\ge0$ on $[0,\infty)$, where
$M_*^+$ is the set of normal positive linear functionals on a von Neumann algebra $M$,
$\xi_\psi$ is the vector representative of $\psi$ in the standard representation of $M$,
and $\Delta_{\ffi,\psi}$ is the {\it relative modular operator} for $\ffi,\psi$
(\cite{Ar1,Ar2}). In the matrix algebra setting with $\ffi=\Tr(A\,\cdot)$ and
$\psi=\Tr(B\,\cdot)$ on $M=\bM_n$ where $A,B\in\bP_n$, one has
$\Delta_{\ffi,\psi}=L_AR_B^{-1}$ and the function in \eqref{F-1.4} reduces to
\begin{equation}\label{F-1.5}
I_f^{-1}(A,B,X)=\<XB^{1/2},f(L_AR_B^{-1})XB^{1/2}\>_\HS.
\end{equation}
which is further reduced to $\Tr X^*A^\alpha XB^{1-\alpha}$ when $f(x)=x^\alpha$.

\subsection{Quasi-entropy}

Quasi-entropies introduced in \cite{Pe1,Pe2} are given by \eqref{F-1.5} in matrix algebras
(and by \eqref{F-1.4} in von Neumann algebras). Thus, the {\it quasi-entropy}
$S_f^X(A\|B)$ for $A,B\in\bP_n$ and $X\in\bM_n$ is nothing but $I_f^\theta(A,B,X)$ with
$\theta=-1$ while the assumption $f>0$ is irrelevant in the definition of $S_f^X(A\|B)$.
This quantity is a generalization of the relative entropy
$$
S(A\|B):=\Tr A(\log A-\log B).
$$
Indeed, $S(A\|B)=S_f^X(A\|B)$ when $f(x)=x\log x$ and $X=I_n$, the $n\times n$ identity
matrix. Monotonicity and joint convexity properties of $S_f^X(A,B)$ were proved in
\cite{Pe1,Pe2}. The convexity result in \cite{Pe1} tells that
$(A,B)\in\bP_n\times\bP_n\mapsto S_f^X(A\|B)$ is jointly convex for every $X\in\bM_n$ if
$f$ is an operator convex function on $(0,\infty)$.

\subsection{Monotone metrics}

The set $\bP_n$ is an open subset of $\bH_n$ that is identified with the $n^2$-dimensional
Euclidean space. Hence $\bP_n$ naturally has a smooth Riemannian manifold structure
so that the tangent space at any foot point is identified with $\bH_n$. When $f$ is an
operator monotone function on $(0,\infty)$, the associated {\it monotone metric} on $\bP_n$
is given by
\begin{equation}\label{F-1.6}
\gamma_A^f(H,K):=\<H,(J_A^f)^{-1}(K)\>_\HS,\qquad A\in\bP_n,\ H,K\in\bH_n.
\end{equation}
Such monotone metrics on the manifold $\bP_n$ (or rather restricted on the submanifold
$\cD_n$) were characterized in \cite{Pe3} in terms of monotonicity under stochastic maps,
i.e., completely positive trace-preserving maps. A monotone metric is also called a
{\it quantum Fisher information} since it is a quantum generalization of the classical
Fisher information. Expression \eqref{F-1.6} makes sense for all $X,Y\in\bM_n$ in place of
$H,K\in\bH_n$ so that $\gamma_A^f(X,X)$ is $I_f^\theta(A,A,X)$ with $\theta=1$. We put
the minus sign of $-\theta$ in definition \eqref{F-1.1} to adjust the parameter to the
expression of monotone metrics. In \cite{HP1,HP2} we discussed Riemannian metrics which
are written as $\<H,(J_A^f)^{-\theta}(K)\>$ for $A\in\bP_n$ (foot point) and $H,K\in\bH_n$
(tangent vectors) when $M(x,y):=f(xy^{-1})y$, $x,y>0$, is a symmetric homogeneous mean.
For $K=H$ this metric is written in the form $I_f^\theta(A,A,H)$.

\subsection{Quantum skew information and quantum $\chi^2$-divergence}

It was observed in \cite{PeHa} (also \cite{CH,Be}) that $I_\rho^\mathrm{WYD}(p,K)$,
$0<p<1$, are expressed, apart from a constant factor, in terms of monotone metrics as
$$
I_\rho^\WYD(p,K)=\gamma_\rho^{f_p}(i[\rho,K],i[\rho,K])
=\<i[\rho,K],(J_\rho^{f_p})^{-1}(i[\rho,K])\>_\HS,
$$
where $f_p$ is an operator monotone function on $(0,\infty)$ defined by
$$
f_p(x):=p(1-p){(x-1)^2\over(x^p-1)(x^{1-p}-1)}.
$$
This observation was extended in \cite{CH} to the paremeter range $-1\le p\le2$.
Furthermore, the WYD skew information was recently generalized by Hansen \cite{Ha2} as
follows: Let $f$ be an operator monotone function on $(0,\infty)$ such that $f$ is
symmetric (i.e., $f(x)=f(x^{-1})x$ for all $x>0$) with $f(1)=1$ and is regular in the
sense that $f(0)$ $(:=\lim_{x\searrow0}f(x))>0$.
The {\it metric adjusted skew information} associated with $f$ is then defined to be
\begin{equation}\label{F-1.7}
I_\rho^f(K):={f(0)\over2}\<i[\rho,K],(J_\rho^f)^{-1}(i[\rho,K])\>_\HS,
\qquad\rho\in\cD_n,\ K\in\bH_n,
\end{equation}
which is written as $I_f^\theta(\rho,\rho,i[\rho,K])$ with $\theta=1$ multiplied by a
constant $f(0)/2$. Moreover, when $f(0)=0$ (non-regular), the {\it unbounded}
({\it metric adjusted}\,) {\it skew information} was defined in \cite{CH} by removing the
constant factor $f(0)/2$ in \eqref{F-1.7}. It was proved in \cite{CH,Ha2} that
$I_\rho^f(K)$ and its unbounded version are convex in $\rho$ for any fixed $K\in\bH_n$.

The {\it quantum $\chi^2$-divergence} recently introduced in \cite{TKRWV} is given by
\begin{equation}\label{F-1.8}
\chi_f^2(\rho,\sigma):=\<\rho-\sigma,(J_\sigma^f)^{-1}(\rho-\sigma)\>_\HS,
\qquad\rho,\sigma\in\cD_n,
\end{equation}
associated with an operator monotone function $f>0$ on $(0,\infty)$. Since
$$
\chi_f^2(\rho,\sigma)=\<\rho,(J_\sigma^f)^{-1}(\rho)\>_\HS-1,
$$
one can rewrite
$$
\chi_f^2(\rho,\sigma)=I_f^1(\sigma,\sigma,\rho-\sigma)=\gamma_\sigma^f(\rho,\rho)-1.
$$
Joint convexity of $(\rho,\sigma)\mapsto\chi_f^2(\rho,\sigma)$ was proved in \cite{TKRWV}
in a special case and generalized in \cite{Ha3} to a general operator monotone function
$f$.

\section{Convexity/concavity of $I_f^\theta(A,B,X)$}
\setcounter{equation}{0}

We begin with enumerating convexity and concavity properties of the function $I_f^\theta$
associated with a function $f>0$ on $(0,\infty)$ and a non-zero real number $\theta$:
\begin{itemize}
\item[(i)] $(A,B,X)\in\bP_n\times\bP_n\times\bM_n\mapsto I_f^\theta(A,B,X)$ is jointly
convex for every $n\in\bN$,
\item[(ii)] $(A,B)\in\bP_n\times\bP_n\mapsto\log I_f^\theta(A,B,X)$ is jointly convex for
any fixed $X\in\bM_n$ and for every $n\in\bN$,
\item[(iii)] $\theta>0$, and $(A,B)\in\bP_n\times\bP_n\mapsto I_f^\theta(A,B,X)$ is jointly
convex for any fixed $X\in\bM_n$ and for every $n\in\bN$,
\item[(iv)]  $(A,B)\in\bP_n\times\bP_n\mapsto I_f^\theta(A,B,X)$ is jointly convex for any
fixed $X\in\bM_n$ and for every $n\in\bN$,
\item[(v)] $(A,B)\in\bP_n\times\bP_n\mapsto I_f^{-\theta}(A,B,X)$ is jointly concave for
any fixed $X\in\bM_n$ and for every $n\in\bN$,
\item[(vi)] $(A,B)\in\bP_n\times\bP_n\mapsto\log I_f^{-\theta}(A,B,X)$ is jointly concave
for any fixed $X\in\bM_n$ and for every $n\in\bN$.
\end{itemize}

For each of the above properties we also consider the property reduced to
$A=B=\rho\in\cD_n$, that is,
\begin{itemize}
\item[(i$'$)] $(\rho,X)\in\cD_n\times\bM_n\mapsto I_f^\theta(\rho,\rho,X)$ is jointly
convex for every $n\in\bN$,
\item[(ii$'$)] $\rho\in\cD_n\mapsto\log I_f^\theta(\rho,\rho,X)$ is convex for any fixed
$X\in\bM_n$ and for every $n\in\bN$,
\end{itemize}
and similarly for (iii$'$)-(vi$'$).

When the Riemannian manifold $\cD_n$ is concerned, the tangent space at each $\rho\in\cD_n$
is $\bH_n^0:=\{H\in\bH_n:\Tr H=0\}$, and $I_f^\theta(\rho,\rho,H)$, $H\in\bH_n^0$, is
considered as a Riemannian metric on $\cD_n$ (see Section 1.3). So, when restricted to
$A=B=\rho\in\cD_n$, it is natural to further restrict $X\in\bM_n$ to $X=H\in\bH_n^0$. We
thus consider the following properties as well:
\begin{itemize}
\item[(i$''$)] $(\rho,H)\in\cD_n\times\bH_n^0\mapsto I_f^\theta(\rho,\rho,H)$ is jointly
convex for every $n\in\bN$,
\item[(ii$''$)] $\rho\in\cD_n\mapsto\log I_f^\theta(\rho,\rho,H)$ is convex for any fixed
$H\in\bH_n^0$ and for every $n\in\bN$,
\end{itemize}
and similarly for (iii$''$)-(vi$''$).

Finally, we present the following intrinsic conditions for $f$ and $\theta$:
\begin{itemize}
\item[(vii)] $f$ is operator monotone on $(0,\infty)$ and $\theta\in(0,1]$,
\item[(viii)] $f$ is operator monotone on $(0,\infty)$ and $\theta\in(0,2]$.
\end{itemize}

Define the {\it $(-\theta)$-power symmetrization} of $f$ by
\begin{equation}\label{F-2.1}
f_{-\theta,\sym}(x)
:=\biggl({f(x)^{-\theta}+\tilde f(x)^{-\theta}\over2}\biggr)^{-1/\theta}
\quad\mbox{where}\quad\tilde f(x):=xf(x^{-1}),\quad x>0.
\end{equation}
When $\theta=-1$, this is the usal symmetrization $(f(x)+\tilde f(x))/2$.

The next theorem is our main result in this section. Note that implication
(vii) $\Rightarrow$ (i) was proved in \cite{HP3} and (i) $\Leftrightarrow$ (i$'$) $\Leftrightarrow$ (vii) for fixed $\theta=1$ was also shown there.

\begin{thm}\label{T-2.1}
Concerning the above properties the following implications hold:
\begin{itemize}
\item[\rm(a)] Each of {\rm(i)}--{\rm(vi)} is equivalent to the corresponding condition with
prime.
\item[\rm(b)] Each of {\rm(i)}--{\rm(vi)} for $f_{-\theta,\sym}$ in place of $f$ is
equivalent to the corresponding condition with double prime for $f$. Consequently, if $f$
is symmetric, i.e., $f(x)=f(x^{-1})x$ for all $x>0$, then each of {\rm(i)}--{\rm(vi)} is
equivalent to the corresponding condition with double prime.
\item[\rm(c)] {\rm(vii)} $\Leftrightarrow$ {\rm(i)}
$\Rightarrow$ {\rm(ii)} $\Leftrightarrow$ {\rm(iii)}
$\Rightarrow$ {\rm(viii)}.
\item[\rm(d)] {\rm(vii)} $\Leftrightarrow$ {\rm(v)}
$\Rightarrow$ {\rm(vi)} $\Rightarrow$ {\rm(viii)}.
\item[\rm(e)] {\rm(iii)} $\Rightarrow$ {\rm(iv)}
$\Rightarrow$ $\theta\in[-2,-1]\cup(0,2]$.
\end{itemize}
\end{thm}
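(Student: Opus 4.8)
My plan is to isolate the essentially combinatorial reductions (a) and (b) from the analytic core (c)--(e). Two exact embedding identities drive the reductions. First, for $A,B\in\bP_n$, $X\in\bM_n$ put $\rho:=A\oplus B\in\bP_{2n}$ and $\tilde X:=\begin{pmatrix}0&X\\0&0\end{pmatrix}$; since $L_\rho,R_\rho$ leave the upper-right block invariant and act there as $L_A,R_B$, every function of $L_\rho R_\rho^{-1}$ and $R_\rho$ does too, so $I_f^\theta(A,B,X)=I_f^\theta(\rho,\rho,\tilde X)$. As $(A,B,X)\mapsto(\rho,\tilde X)$ is linear, this transports each diagonal property to the two-/three-variable one. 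Second, to pass from $\cD$ to $\bP$ I augment by a scalar block: for $\rho\in\bP_n$ with $\Tr\rho<1$ set $\hat\rho:=\rho\oplus(1-\Tr\rho)\in\cD_{n+1}$ and $\hat X:=X\oplus0$; the same block invariance gives $I_f^\theta(\hat\rho,\hat\rho,\hat X)=I_f^\theta(\rho,\rho,X)$, and a convex combination of the $\hat\rho_k$ is again an augmented matrix. Composing the two identities (after a harmless common rescaling $\rho\mapsto\eps\rho$ to enforce $\Tr\rho<1$, which only multiplies $I_f^\theta$ by the constant $\eps^{-\theta}$) reduces each unprimed property on $\bP$ to its primed form on a slightly larger $\cD$, proving part (a) for all of (i)--(vi) simultaneously; since both identities are exact equalities, they survive $\log$ and reversal of inequalities, covering the logarithmic and concave variants.

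For part (b) the key is a pairing computation: expanding $I_f^\theta(\rho,\rho,H)$ in the eigenbasis of $\rho$ and grouping the $(i,j)$ and $(j,i)$ terms, Hermiticity $|H_{ij}|=|H_{ji}|$ together with $\tilde f(x)=xf(x^{-1})$ collapses each pair into $f_{-\theta,\sym}$, yielding the exact identity $I_f^\theta(\rho,\rho,H)=I_{f_{-\theta,\sym}}^\theta(\rho,\rho,H)$ for every $H\in\bH_n$. Hence the double-primed properties of $f$ are verbatim those of $g:=f_{-\theta,\sym}$, which one checks is symmetric. For symmetric $g$ I upgrade the primed (general $X$) properties to the double-primed (traceless Hermitian $H$) ones by a Hermitian doubling: for $X\in\bM_n$ the matrix $H:=\begin{pmatrix}0&X\\X^*&0\end{pmatrix}$ is traceless Hermitian, and symmetry of $g$ gives $I_g^\theta(\rho\oplus\rho,\rho\oplus\rho,H)=2\,I_g^\theta(\rho,\rho,X)$; as $X\mapsto H$ and $\rho\mapsto\rho\oplus\rho$ are linear, double-primed convexity for $g$ forces primed convexity for $g$, the reverse being trivial restriction. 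Feeding this into part (a) applied to $g$ closes part (b).

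The analytic content lives in (c) and (d). The implication (vii) $\Rightarrow$ (i) is the convexity theorem already established in \cite{HP3}, which I invoke; its twin (vii) $\Rightarrow$ (v) is the corresponding Lieb--Ando--type joint concavity, proved identically from the integral representation of the operator monotone $f$. Two implications are then free: a positive concave function is log-concave, giving (v) $\Rightarrow$ (vi), and a log-convex function is convex, giving half of (ii) $\Rightarrow$ (iii). The content-bearing direction (iii) $\Rightarrow$ (ii) rests on the homogeneity $I_f^\theta(sA,sB,X)=s^{-\theta}I_f^\theta(A,B,X)$ of degree $-\theta$ together with the operator log-convex/concave theory of \cite{AH}: via the correspondence ``$\<X,M(A,B)X\>_\HS$ jointly convex in $(A,B,X)$ $\Leftrightarrow$ $M(A,B)^{-1}$ jointly operator concave'' applied to $M=(J_{A,B}^f)^{-\theta}$, joint convexity of $I_f^\theta$ is equivalent to operator concavity of $(J_{A,B}^f)^{\theta}$, which by \cite{AH} corresponds to operator log-convexity of $(J_{A,B}^f)^{-\theta}$ and hence to joint log-convexity of $I_f^\theta$; this also pins $\theta>0$ and delivers (i) $\Rightarrow$ (ii). Finally the necessities (i) $\Rightarrow$ (vii), (v) $\Rightarrow$ (vii), (iii) $\Rightarrow$ (viii) and (vi) $\Rightarrow$ (viii) come from differentiating the reduced diagonal functions at $A=B=\rho$ along explicit Hermitian directions: operator monotonicity of $f$ drops out of the second-order condition, and evaluating at $x=1$ isolates an inequality in $\theta$ alone forcing $\theta\in(0,1]$ in (vii) and $\theta\le2$ in (viii).

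For part (e), (iii) $\Rightarrow$ (iv) merely drops the sign constraint, so the issue is the necessary range in (iv). I extract it from the same local second-order test at $A=B=\rho$, now without assuming $\theta>0$: the Hessian of $(a,b)\mapsto(f(a/b)b)^{-\theta}$ at $a=b=1$ is $c^{-\theta-2}\bigl[-\theta c\,g_{ij}+\theta(\theta+1)g_ig_j\bigr]$ with $g(a,b)=bf(a/b)$, and positivity of this quadratic form, analysed in the two sign-regimes of $\theta$, produces the two intervals; as an independent check, testing power functions $f(x)=x^\alpha$ gives $I_f^\theta(A,B,X)=\Tr X^*A^{-\alpha\theta}XB^{-(1-\alpha)\theta}$, and Ando's sharp convexity region for $\Tr X^*A^pXB^q$ (with $p=-\alpha\theta$, $q=-(1-\alpha)\theta$, $p+q=-\theta$) translates exactly into $\theta\in[-2,-1]\cup(0,2]$. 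The main obstacle is the analytic core of (c)--(d): although (vii) $\Rightarrow$ (i) is quotable, the converse necessities and above all the bridge (iii) $\Rightarrow$ (ii) between ordinary and logarithmic convexity genuinely require the operator log-convexity machinery of \cite{AH}; pinning the \emph{sharp} endpoints that separate $(0,1]$, $(0,2]$ and $[-2,-1]\cup(0,2]$---showing boundary values are attained while beyond them convexity fails---is the delicate point where I expect most of the work to concentrate.
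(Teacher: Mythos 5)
Your parts (a) and (b) are correct and follow essentially the paper's own route: the block-matrix embedding $I_f^\theta(A,B,X)=I_f^\theta(A\oplus B,A\oplus B,\tilde X)$, the homogeneity rescaling plus scalar augmentation to pass between $\bP_n$ and $\cD_n$, the eigenbasis pairing identity $I_f^\theta(\rho,\rho,H)=I_{f_{-\theta,\sym}}^\theta(\rho,\rho,H)$ for Hermitian $H$, and the Hermitian doubling. One slip: $\rho\oplus\rho$ has trace $2$, so to stay inside $\cD_{2n}$ you must use $\tfrac12(\rho\oplus\rho)$, which by homogeneity inserts the factor $2^{\theta+1}$ (the paper's identity \eqref{F-2.6}); this is harmless since the map is still linear. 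The genuine gaps are in the analytic core (c)--(e).

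First gap: (iii) $\Rightarrow$ (ii). The correspondence you invoke --- $\<X,M(A,B)X\>_\HS$ jointly convex in $(A,B,X)$ iff $M(A,B)^{-1}$ jointly operator concave --- characterizes condition (i), not (iii). Condition (iii) gives convexity in $(A,B)$ for each \emph{fixed} $X$, which is equivalent only to joint operator \emph{convexity} of $(J_{A,B}^f)^{-\theta}$, and operator convexity of a positive map does not imply operator concavity of its inverse. So your chain delivers at best (i) $\Rightarrow$ (ii) (as you in fact concede mid-sentence) and leaves the asserted equivalence (ii) $\Leftrightarrow$ (iii) unproven. What is needed here is Lieb's elementary level-set argument, which works for fixed $X$: since $\phi:=I_f^\theta(\cdot,\cdot,X)$ is positive, jointly convex, and homogeneous of degree $-\theta$ with $\theta>0$, the function $k:=\phi^{-1/\theta}$ is concave, hence $\phi^r=k^{-\theta r}$ is convex for every $r>0$ and $\log\phi=\lim_{r\searrow0}(\phi^r-1)/r$ is convex. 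You mention the homogeneity but never use it. Also, (ii) $\Rightarrow$ (iii) requires establishing $\theta>0$, which the paper extracts from convexity of $\log I_f^\theta(xI_n,xI_n,I_n)=-\theta\log x+\mathrm{const}$; your claim that the operator-theoretic correspondence ``also pins $\theta>0$'' is not substantiated.

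Second gap: all the necessity directions ((i) $\Rightarrow$ (vii), (ii)/(iii) $\Rightarrow$ (viii), (vi) $\Rightarrow$ (viii), and (iv) $\Rightarrow$ $\theta\in[-2,-1]\cup(0,2]$). You propose to obtain them from ``local second-order tests at $A=B=\rho$,'' but Hessians at scalar (commuting) test points only constrain numerical data such as $f''$ and the sign of $\theta$; they can never yield operator monotonicity of $f$, nor the sharp bounds $\theta\le2$ and $\theta\ge-2$. The paper's mechanism is entirely different: specialize $B=I_n$ and $X=[\xi\,0\cdots0]$ (and the transposed variant) to reduce $I_f^\theta$ to the vector forms $\<\xi,f(A)^{-\theta}\xi\>$ and $\<\xi,(f(A^{-1})A)^{-\theta}\xi\>$; extend from all finite dimensions to $B(\cH)^{++}$ (Lemma \ref{L-2.3}); apply the Ando--Hiai characterizations (Lemma \ref{L-2.2}) to conclude that $f(x)^\theta$ and $(f(x^{-1})x)^\theta$ are operator monotone; and then invoke the crucial Pick-function/analytic-continuation argument (Lemma \ref{L-2.4}) showing that operator monotonicity of \emph{both} these functions forces $0<\theta\le2$ and $f$ operator monotone. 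The scalar Hessian computation enters only as a supplement (to push $\theta\le2$ down to $\theta\le1$ for (vii), and in one residual case of (e)). For (e) with $\theta<0$, the bound $\theta\ge-2$ further requires the growth estimates for operator convex functions coming from their integral representations (Lemma \ref{L-2.5}) and a four-case analysis. Your ``independent check'' with power functions via Ando's theorem verifies only that the range $[-2,-1]\cup(0,2]$ is attained, i.e.\ sharpness; it cannot prove necessity for a general $f$ satisfying (iv). Your closing remark that pinning the sharp endpoints is ``where most of the work concentrates'' is exactly right --- that work is the core of the theorem, and it is missing from the proposal.
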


\noindent{\it Proof of {\rm(a)}.}\enspace
The proof is an easy application of the $2\times2$ block matrix trick. For each
$A,B\in\bP_n$ set
$$
\tilde A:=\bmatrix A&0\\0&B\endbmatrix\in\bP_{2n}.
$$
For any $\tilde X=\bmatrix X_{11}&X_{12}\\X_{21}&X_{22}\endbmatrix\in\bM_{2n}$
($=\bM_n\otimes\bM_2$), since
$$
L_{\tilde A}\tilde X=\tilde A\tilde X
=\bmatrix AX_{11}&AX_{12}\\BX_{21}&BX_{22}\endbmatrix,\quad
R_{\tilde A}\tilde X=\tilde X\tilde A
=\bmatrix X_{11}A&X_{12}B\\X_{21}A&X_{22}B\endbmatrix,
$$
one can write
$$
L_{\tilde A}=L_A\oplus L_A\oplus L_B\oplus L_B,\quad
R_{\tilde A}=R_A\oplus R_B\oplus R_A\oplus R_B
$$
under the identification of the Hilbert space $(\bM_{2n},\<\cdot,\cdot\>_\HS)$ with the
direct sum $\bM_n\oplus\bM_n\oplus\bM_n\oplus\bM_n$ via the isomorphism
$X\mapsto X_{11}\oplus X_{12}\oplus X_{21}\oplus X_{22}$. We thus have
\begin{equation}\label{F-2.2}
J_{\tilde A}^f=J_A^f\oplus J_{A,B}^f\oplus J_{B,A}^f\oplus J_B^f
\end{equation}
so that
$$
I_f^\theta\biggl(\tilde A,\tilde A,\bmatrix0&X\\0&0\endbmatrix\biggr)
=\biggl\<\bmatrix0&X\\0&0\endbmatrix,
(J_{\tilde A}^f)^{-\theta}\bmatrix0&X\\0&0\endbmatrix\biggr\>_\HS
=I_f^\theta(A,B,X),
$$
from which each of (i)--(vi) is equivalent to the corresponding condition reduced to
$A=B\in\bP_n$. It remains to show that the latter condition is also equivalent to the
condition further reduced to $A=B\in\cD_n$. Since
$I_f^\theta(cA,cA,X)=c^{-\theta}I_f(A,A,X)$ for $A\in\bP_n$, $X\in\bM_n$ and $c>0$, the
condition in question is equivalent to that with restriction $\Tr A<1$. For such $A\in\bP_n$ and $X\in\bM_n$ one has
$I_f^\theta(A,A,X)=I_f^\theta(\hat\rho,\hat\rho,\hat X)$ by letting
$\hat\rho:=A\oplus(1-\Tr A)\in\cD_{n+1}$ and $\hat X:=X\oplus0\in\bM_{n+1}$. This
immediately implies the conclusion.\qed

\bigskip
\noindent{\it Proof of {\rm(b)}.}\enspace
Set $g:=f_{-\theta,\sym}$ for brevity. Since
$g^{-\theta}=(f^{-\theta}+\tilde f^{-\theta})/2$, it is obvious that
\begin{equation}\label{F-2.3}
(J_{A,B}^g)^{-\theta}={(J_{A,B}^f)^{-\theta}+(J_{A,B}^{\tilde f})^{-\theta}\over2},
\qquad A,B\in\bP_n.
\end{equation}
Moreover, taking the spectral decompositions $A=\sum_{i=1}^k\alpha_iP_i$ and
$B=\sum_{j=1}^l\beta_jQ_j$, for every $X\in\bM_n$ we have by \eqref{F-1.2}
\begin{align}
I_f^\theta(A,B,X)
&=\sum_{i=1}^k\sum_{j=1}^l(f(\alpha_i\beta_j^{-1})\beta_j)^{-\theta}
\Tr X^*P_iXQ_j \nonumber\\
&=\sum_{i=1}^k\sum_{j=1}^l(\tilde f(\beta_j\alpha_i^{-1}\alpha_i)^{-\theta}
\Tr XQ_jX^*P_i=I_{\tilde f}^\theta(B,A,X^*). \label{F-2.4}
\end{align}
In particular, when $A=B$ and $X=H\in\bH_n$, we have
\begin{equation}\label{F-2.5}
I_f^\theta(A,A,H)=I_{\tilde f}^\theta(A,A,H)=I_g^\theta(A,A,H)
\end{equation}
thanks to \eqref{F-2.3}. For every $\rho\in\cD_n$ and $X\in\bM_n$ set
$$
\tilde \rho:={1\over2}\begin{bmatrix}\rho&0\\0&\rho\end{bmatrix}\in\cD_{2n},\quad
\tilde H:=\begin{bmatrix}0&X\\X^*&0\end{bmatrix}\in\bH_{2n}^0.
$$
Since $(J_{\rho/2}^g)^{-\theta}=2^\theta(J_\rho^g)^{-\theta}$ and $g$ is symmetric (i.e.,
$\tilde g=g$), it follows from \eqref{F-2.2} and \eqref{F-2.4} (for $g$) that
$$
I_g^\theta(\tilde\rho,\tilde\rho,\tilde H)
=2^\theta\{I_g^\theta(\rho,\rho,X)+I_g^\theta(\rho,\rho,X^*)\}
=2^{\theta+1}I_g^\theta(\rho,\rho,X).
$$
Furthermore, by \eqref{F-2.3} and \eqref{F-2.4} we have
$$
I_g^\theta(\tilde\rho,\tilde\rho,\tilde H)
={I_f^\theta(\tilde\rho,\tilde\rho,\tilde H)
+I_{\tilde f}^\theta(\tilde\rho,\tilde\rho,\tilde H)\over2}
=I_f^\theta(\tilde\rho,\tilde\rho,\tilde H).
$$
Therefore,
\begin{equation}\label{F-2.6}
I_f^\theta(\tilde\rho,\tilde\rho,\tilde H)=2^{\theta+1}I_g^\theta(\rho,\rho,X).
\end{equation}
From \eqref{F-2.5} and \eqref{F-2.6} together with (a) one can see that each of (i)--(vi)
for $g$ is equivalent to the corresponding with double prime for $f$. The latter assertion
of (b) is immediate since $g=f$ for symmetric $f$.\qed

\bigskip
The part (c)--(e) is the main assertion of the theorem. The proof is based on
\cite[Theorems 3.1 and 3.7]{AH}, so we first state necessary parts from them as a lemma
for the convenience of the reader. Let $\cH$ be an infinite-dimensional separable
Hilbert space with inner product $\<\cdot,\cdot\>$, and $B(\cH)^{++}$ be the set of all
positive and invertible bounded operators on $\cH$. Let $f>0$ be a continuous function
on $(0,\infty)$, and $f(A)$ be defined for $A\in B(\cH)^{++}$ via functional calculus as
usual. (A function $f>0$ on $(0,\infty)$ is inevitably continuous if it satisfies any of
the conditions listed before Theorem 2.1, so the continuity assumption for $f$ here is
harmless.)

\begin{lemma}\label{L-2.2}
In the above situation the following conditions {\rm(a1)}--{\rm(a4)} are equivalent:
\begin{itemize}
\item[\rm(a1)] $f$ is operator monotone decreasing on $(0,\infty)$;
\item[\rm(a2)] $(A,\xi)\in B(\cH)^{++}\times\cH\mapsto\<\xi,f(A)\xi\>$ is jointly convex;
\item[\rm(a3)] $A\in B(\cH)^{++}\mapsto\log\<\xi,f(A)\xi\>$ is convex for every
$\xi\in\cH$;
\item[\rm(a4)] $f$ is operator convex on $(0,\infty)$ and the numerical function $f(x)$
is non-increasing on $(0,\infty)$.
\end{itemize}

Also, the following conditions {\rm(b1)} and {\rm(b2)} are equivalent:
\begin{itemize}
\item[\rm(b1)] $f$ is operator monotone {\rm(}or equivalently, operator concave{\rm)} on
$(0,\infty)$;
\item[\rm(b2)] $A\in B(\cH)^{++}\mapsto\log\<\xi,f(A)\xi\>$ is concave for every
$\xi\in\cH$.
\end{itemize}
\end{lemma}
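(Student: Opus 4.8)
The plan is to split the two chains into a \emph{constructive} half, where (log-)convexity is built from an integral representation, and a \emph{converse} half, where operator monotonicity must be extracted from scalar convexity data; I expect the latter to be the crux. I would first record the classical equivalence (a1)\,$\Leftrightarrow$\,(a4): since $f>0$ is operator monotone decreasing exactly when $-f$ is operator monotone, standard L\"owner theory yields a representation
$$
f(x)=a+\int_{[0,\infty)}{1\over x+t}\,d\nu(t),\qquad x>0,
$$
with $a\ge0$ and a positive measure $\nu$, and conversely every such $f$ is operator monotone decreasing. As each resolvent $x\mapsto(x+t)^{-1}$ is at once operator convex and numerically non-increasing, this makes (a1)\,$\Leftrightarrow$\,(a4) transparent and supplies the building blocks for everything that follows.

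For the forward implications I would feed this representation into $\langle\xi,f(A)\xi\rangle=a\|\xi\|^2+\int_{[0,\infty)}\langle\xi,(A+t)^{-1}\xi\rangle\,d\nu(t)$. Joint convexity (a2) then follows from the variational identity
$$
\langle\xi,(A+t)^{-1}\xi\rangle
=\sup_{\eta\in\cH}\bigl\{2\,\mathrm{Re}\,\langle\xi,\eta\rangle-\langle\eta,(A+t)\eta\rangle\bigr\},
$$
which writes each integrand as a supremum of functions jointly affine in $(A,\xi)$, and a non-negative superposition of jointly convex functions is jointly convex. For log-convexity (a3) I would instead check that each resolvent $A\mapsto\langle\xi,(A+t)^{-1}\xi\rangle$ is log-convex; displaying the case $t=0$ for brevity, one differentiates $\psi(s)=\langle\xi,(A+sH)^{-1}\xi\rangle$ twice and sets $\eta=A^{-1}\xi$, $\zeta=H\eta$, whereupon the desired $\psi''\psi\ge(\psi')^2$ follows from the Cauchy--Schwarz inequality $\langle\eta,\zeta\rangle^2\le\langle\eta,A\eta\rangle\langle\zeta,A^{-1}\zeta\rangle$, and log-convexity is stable under addition and integration of positive functions. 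The forward half of (b) is softer still: operator monotonicity forces $f$ operator concave, so $A\mapsto\langle\xi,f(A)\xi\rangle$ is positive and concave, whence its logarithm is concave because $\log$ is concave and increasing, giving (b1)\,$\Rightarrow$\,(b2).

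The converse implications (a2)\,$\Rightarrow$\,(a1), (a3)\,$\Rightarrow$\,(a1) and (b2)\,$\Rightarrow$\,(b1) are where I expect the real difficulty, namely turning scalar (log-)convexity back into operator monotonicity. The route is to differentiate the defining inequalities twice along $A+tH$. For (a2), minimizing the second variation over the vector component $\eta$ of the direction shows joint convexity to be equivalent to the strengthened bound
$$
\langle\xi,D^2f(A)[H,H]\xi\rangle\ge2\,\langle Df(A)[H]\xi,\,f(A)^{-1}Df(A)[H]\xi\rangle,
$$
whereas (a3) and (b2) reduce to the scalar condition $\langle\xi,D^2f(A)[H,H]\xi\rangle\langle\xi,f(A)\xi\rangle\ge\langle\xi,Df(A)[H]\xi\rangle^2$ for all $\xi,H$ (with the inequality reversed for the log-concave case (b2)). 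Identifying each of these second-order conditions with operator monotone decreasing, respectively operator monotone, $f$ is exactly what \cite[Theorems 3.1 and 3.7]{AH} achieve via divided-difference matrices and L\"owner's theorem. Accordingly my plan is to carry out the forward estimates above by hand and to invoke \cite{AH} for these converse characterizations, which is the one ingredient I would not expect to reprove from scratch.
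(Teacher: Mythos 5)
Your proposal should first be measured against what the paper actually does: Lemma \ref{L-2.2} is given no proof at all --- it is explicitly a restatement of ``necessary parts'' of \cite[Theorems 3.1 and 3.7]{AH}, and that citation \emph{is} the proof. So your plan of quoting \cite{AH} for the converse implications (a2)$\Rightarrow$(a1), (a3)$\Rightarrow$(a1) and (b2)$\Rightarrow$(b1) is exactly the paper's route, and the forward halves you work out by hand are correct, self-contained extras: the variational identity $\langle\xi,(A+t)^{-1}\xi\rangle=\sup_\eta\{2\,\mathrm{Re}\langle\xi,\eta\rangle-\langle\eta,(A+t)\eta\rangle\}$ does exhibit each resolvent term as a supremum of jointly affine functions, hence gives (a1)$\Rightarrow$(a2); the Cauchy--Schwarz computation $\psi''\psi\ge(\psi')^2$ for $\psi(s)=\langle\xi,(A+sH+t)^{-1}\xi\rangle$ is right, and log-convexity does survive sums and integrals by H\"older, giving (a1)$\Rightarrow$(a3); and (b1)$\Rightarrow$(b2) via operator concavity is fine (the paper itself records that operator monotonicity and operator concavity coincide here). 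Your second-variation reformulations of (a2), (a3), (b2) are also accurate.

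The one genuine gap is (a1)$\Leftrightarrow$(a4). The representation $f(x)=a+\int_{[0,\infty)}(x+t)^{-1}\,d\nu(t)$ is derived \emph{from} (a1), so it makes only (a1)$\Rightarrow$(a4) transparent; the converse --- that a positive, operator convex, numerically non-increasing function is operator monotone decreasing --- does not follow from the observation that resolvents happen to enjoy both properties, and is itself one of the nontrivial equivalences of \cite[Theorem 3.1]{AH}. As written, (a4) is never connected back into your equivalence chain. The fix is either to cite \cite[Theorem 3.1]{AH} for this direction as well, or to argue directly: given $A\le B$ in $B(\cH)^{++}$, one may assume $B-A\ge\delta I$ with $\delta>0$ (otherwise replace $B$ by $B+\eps I$ and let $\eps\searrow0$); writing $B=\lambda\,(A/\lambda)+(1-\lambda)\,(B-A)/(1-\lambda)$, operator convexity together with the bound $f(X)\le f(c)I$ for $X\ge cI$ (valid for numerically non-increasing $f$ by the spectral theorem) gives $f(B)\le\lambda f(A/\lambda)+(1-\lambda)f(\delta)I$, and letting $\lambda\nearrow1$ yields $f(B)\le f(A)$. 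A smaller quibble of the same kind: extracting your integral representation from ``standard L\"owner theory'' uses the positivity of $f$ to kill the linear term and to make $\nu$ integrable against $(1+t)^{-1}$; this is precisely the content of \cite{Ha1} (quoted by the paper itself in the proof of Lemma \ref{L-2.5}), so it deserves that citation rather than being treated as immediate.
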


Note that log-convexity is stronger than convexity for positive functions while
log-concavity is weaker than concavity. The log-convexity condition (a3) characterizes
operator monotone decreasingness of $f$ that is a stronger version of operator convexity.
On the other hand, the log-concavity condition (b2) is equivalent to operator concavity
of $f$. It is well known \cite[V.2.5]{Bh} that operator monotony and operator concavity
are equivalent for a continuous non-negative function on $(0,\infty)$.

To make the proof of the theorem more tractable, we next present a few more lemmas that
are some technical ingredients of the proof of the part (c)--(e).

\begin{lemma}\label{L-2.3}
Let $f$ be as above, and assume that $(A,\xi)\in\bP_n\times\bC^n\mapsto\<\xi,f(A)\xi\>$ is
jointly convex for every $n\in\bN$, where $\<\cdot,\cdot\>$ is the usual inner product on
$\bC^n$. Then $(A,\xi)\in B(\cH)^{++}\times\cH\mapsto\<\xi,f(A)\xi\>$ is jointly convex.
\end{lemma}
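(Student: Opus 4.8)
The plan is to reduce the infinite-dimensional inequality to the finite-dimensional hypothesis by compressing to an increasing sequence of finite-dimensional subspaces and then passing to a limit. Fix $A_0,A_1\in B(\cH)^{++}$, $\xi_0,\xi_1\in\cH$ and $\lambda\in(0,1)$, and write $A_\lambda:=\lambda A_0+(1-\lambda)A_1$, $\xi_\lambda:=\lambda\xi_0+(1-\lambda)\xi_1$. Since $A_0,A_1$ are positive and invertible, I can choose $0<\eps\le M<\infty$ with $\eps I\le A_i\le MI$ for $i=0,1$ (hence also for $A_\lambda$), so that all the relevant spectra lie in the compact interval $[\eps,M]\subset(0,\infty)$. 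Let $(P_m)_{m\ge1}$ be an increasing sequence of finite-rank orthogonal projections on $\cH$ with $P_m\to I$ strongly, put $\cK_m:=P_m\cH$ and $n_m:=\dim\cK_m$, and fix any constant $c\in[\eps,M]$.

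For $A\in\{A_0,A_1,A_\lambda\}$ define the block-diagonal approximant $A^{(m)}:=P_mAP_m+c(I-P_m)$, which with respect to $\cH=\cK_m\oplus\cK_m^\perp$ equals $(P_mAP_m|_{\cK_m})\oplus cI$. Each $A^{(m)}$ is positive invertible with spectrum in $[\eps,M]$, and the compression $P_mAP_m|_{\cK_m}\in\bP_{n_m}$ is positive definite because $\<\eta,P_mAP_m\eta\>=\<\eta,A\eta\>\ge\eps\|\eta\|^2$ for $\eta\in\cK_m$. Since compression and the block extension are linear and use the same constant $c$, one has $A_\lambda^{(m)}=\lambda A_0^{(m)}+(1-\lambda)A_1^{(m)}$, and likewise the three compressions on $\cK_m$ combine convexly. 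Applying the finite-dimensional joint convexity hypothesis on $\cK_m\cong\bC^{n_m}$ to the compressed operators $P_mA_iP_m|_{\cK_m}$ and the vectors $P_m\xi_i\in\cK_m$, and noting that $f(A^{(m)})P_m\xi=f(P_mAP_m|_{\cK_m})P_m\xi$ since $P_m\xi\in\cK_m$ annihilates the $\cK_m^\perp$ block, I obtain for every $m$ the inequality
\begin{equation}\label{F-plan}
\<P_m\xi_\lambda,f(A_\lambda^{(m)})P_m\xi_\lambda\>
\le\lambda\<P_m\xi_0,f(A_0^{(m)})P_m\xi_0\>
+(1-\lambda)\<P_m\xi_1,f(A_1^{(m)})P_m\xi_1\>.
\end{equation}

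It then remains to pass to the limit $m\to\infty$ in \eqref{F-plan}. First, $A^{(m)}\to A$ strongly, since $P_mAP_m\xi\to A\xi$ and $c(I-P_m)\xi\to0$ for every $\xi\in\cH$. The main technical point — and the step I expect to require the most care — is that strong convergence $A^{(m)}\to A$ does not by itself pass through the continuous functional calculus; what rescues the argument is the uniform spectral confinement to $[\eps,M]$. By the Weierstrass theorem $f$ is a uniform limit of polynomials on $[\eps,M]$; for polynomials one checks by induction (using $\sup_m\|A^{(m)}\|\le M$ together with strong convergence) that $p(A^{(m)})\to p(A)$ strongly, and combining this with the spectral bound $\|f(A^{(m)})-p(A^{(m)})\|\le\sup_{[\eps,M]}|f-p|$ yields $f(A^{(m)})\to f(A)$ strongly. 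Since moreover $P_m\xi_i\to\xi_i$ in norm and $\sup_m\|f(A^{(m)})\|\le\sup_{[\eps,M]}f<\infty$, each inner product in \eqref{F-plan} converges to the corresponding quantity with $A$ and $\xi_i$ in place of $A^{(m)}$ and $P_m\xi_i$. Taking the limit in \eqref{F-plan} gives the joint convexity of $(A,\xi)\mapsto\<\xi,f(A)\xi\>$ on $B(\cH)^{++}\times\cH$, as desired.
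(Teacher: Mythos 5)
Your proof is correct and follows essentially the same route as the paper's: compress to an increasing sequence of finite-dimensional subspaces $P_m\cH$ (the paper fills the complement with $I-P_m$, you with $c(I-P_m)$, an immaterial difference), apply the finite-dimensional hypothesis to the compressions, and pass to the strong-operator limit. The Weierstrass/polynomial argument you spell out for $f(A^{(m)})\to f(A)$ strongly is precisely the ``familiar convergence argument'' the paper invokes without detail, so your write-up is just a more explicit version of the same proof.
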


\begin{proof}
The proof is standard by using a familiar convergence argument. Let $\{e_i\}_{i=1}^\infty$
be an orthonormal basis of $\cH$ (in Lemma \ref{L-2.2}). For each $n\in\bN$ let $P_n$ be
the orthogonal projection onto the linear span of $\{e_1,\dots,e_n\}$, and $I$ be the
identity operator on $\cH$. By assumption we see that
$$
(A,\xi)\in B(\cH)^{++}\times\cH\mapsto\<P_n\xi,P_nf(P_nAP_n)P_n\xi\>
$$
is jointly convex, where $f(P_nAP_n)$ is the functional calculus as an operator on
$P_n\cH$ ($\cong\bC^n$). Since
$$
P_nf(P_nAP_n)P_n=P_nf(P_nAP_n+(I-P_n))P_n
$$
converges to $f(A)$ in the strong operator topology, it follows that
$\<P_n\xi,P_nf(P_nAP_n)P_n\xi\>$ converges to $\<\xi,f(A)\xi\>$ as $n\to\infty$. Hence the
conclusion follows.
\end{proof}

\begin{lemma}\label{L-2.4}
Let $f$ be as above. Assume that both $f(x)^\theta$ and $(f(x^{-1})x)^\theta$ are operator
monotone on $(0,\infty)$ for some $\theta\in\bR\setminus\{0\}$. Then $0<\theta\le2$ and
$f$ is operator monotone on $(0,\infty)$, i.e., condition {\rm(viii)} holds.
\end{lemma}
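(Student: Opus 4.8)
The plan is to set $u:=f^\theta$ and $w:=\tilde f^\theta$, where $\tilde f(x)=xf(x^{-1})$, so that both $u$ and $w$ are strictly positive operator monotone functions on $(0,\infty)$ by hypothesis, linked by the pointwise identity $w(x)=x^\theta u(x^{-1})$ for $x>0$. Recall that a positive operator monotone $g$ is numerically increasing and operator concave, hence concave, which gives the two-sided bound $g(1)\min(1,x)\le g(x)\le g(1)\max(1,x)$ for all $x>0$ (monotonicity handles the inner bounds, concavity the outer ones). First I would pin down the range of $\theta$ from these estimates alone. Putting $y=x^{-1}\to0^+$ in the identity, the upper bound $w(x)\le w(1)x$ (for $x\ge1$) reads $u(y)\le w(1)\,y^{\theta-1}$, while concavity of $u$ gives $u(y)\ge u(1)\,y$; comparing forces $y^{2-\theta}\le w(1)/u(1)$ as $y\to0^+$, hence $\theta\le2$. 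Dually, monotonicity $u(y)\le u(1)$ together with $w(x)\ge w(1)$ (for $x\ge1$), i.e. $u(y)\ge w(1)\,y^{\theta}$, forces $y^{\theta}\le u(1)/w(1)$ as $y\to0^+$, hence $\theta\ge0$; since $\theta\ne0$ we obtain $0<\theta\le2$.

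It then remains to prove that $f$ is operator monotone. When $\theta\in[1,2]$ this is immediate and uses only the first hypothesis: $f=(f^\theta)^{1/\theta}=u^{1/\theta}$ is the composition of the operator monotone $u$ with the operator monotone power $t\mapsto t^{1/\theta}$ (as $1/\theta\in[\tfrac12,1]\subset[0,1]$), hence operator monotone. The delicate case is $\theta\in(0,1)$, where $1/\theta>1$ and $u^{1/\theta}$ need \emph{not} be operator monotone in general (e.g. $u(x)=x$ yields $u^{1/\theta}=x^{1/\theta}$, which is not); here the second hypothesis is indispensable, and this is the main obstacle of the proof.

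For $\theta\in(0,1)$ I would argue through L\"owner's theorem, which characterizes positive operator monotone functions on $(0,\infty)$ as exactly those admitting an analytic (Pick) continuation to $\bC\setminus(-\infty,0]$ that maps the upper half-plane $\bC^+$ into $\overline{\bC^+}$. Thus $u$ and $w$ both so continue, and since $w(z)$ and $z^\theta u(z^{-1})$ are analytic on $\bC\setminus(-\infty,0]$ and agree on $(0,\infty)$, the identity theorem gives $w(z)=z^\theta u(z^{-1})$ throughout $\bC^+$. Fix $z=re^{i\phi}\in\bC^+$ with $r>0$, $\phi\in(0,\pi)$; applying Schwarz reflection $u(\bar\eta)=\overline{u(\eta)}$ to $\eta=r^{-1}e^{i\phi}\in\bC^+$ (so $\bar\eta=z^{-1}$) yields $\arg u(z^{-1})=-\arg u(r^{-1}e^{i\phi})$, whence $\arg w(z)=\theta\phi-\arg u(r^{-1}e^{i\phi})$. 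Because $\theta<1$ this quantity lies in $(-\pi,\pi)$, so it equals the principal argument of $w(z)$, which is $\ge0$ as $w$ is a Pick function; since $r>0$ is arbitrary this delivers the key phase estimate $\arg u(\zeta)\le\theta\arg\zeta$ for every $\zeta\in\bC^+$. Consequently the principal branch $f=u^{1/\theta}$ is well defined and analytic on $\bC\setminus(-\infty,0]$, positive on $(0,\infty)$, and satisfies $\arg f(\zeta)=\theta^{-1}\arg u(\zeta)\le\arg\zeta<\pi$; hence $f$ maps $\bC^+$ into $\overline{\bC^+}$ and is operator monotone, completing the proof. The crux is precisely this phase bound: it is the only place where operator monotonicity of $\tilde f^\theta$, rather than merely of $f^\theta$, is used, and it quantifies how the second hypothesis confines the image sector of $u$ enough to legitimize the otherwise forbidden power $1/\theta>1$.
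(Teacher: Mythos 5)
Your proposal is correct and follows essentially the same route as the paper: the range $0<\theta\le2$ is extracted from the same monotonicity-plus-concavity growth bounds on $f^\theta$ and $(f(x^{-1})x)^\theta$, and operator monotonicity of $f$ is obtained by the same L\"owner-theorem argument, applying Schwarz reflection to the identity $w(z)=z^\theta u(z^{-1})$ to get the phase bound $\arg u(\zeta)\le\theta\arg\zeta$ and hence that $u^{1/\theta}$ is a Pick function. The only (cosmetic) difference is that you split off $\theta\in[1,2]$ and fix the branch of the argument by the direct bound $\theta\phi<\pi$, whereas the paper treats all $\theta\in(0,2]$ at once by a continuity-of-argument argument in $\eta\in(0,\pi)$.
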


\begin{proof}
Let $g(x):=f(x)^\theta$ and so $(f(x^{-1})x)^\theta=g(x^{-1})x^\theta$ for $x>0$. Since
$g(x)$ and $g(x^{-1})x^\theta$ are operator monotone on $(0,\infty)$ and so they are
non-decreasing and concave functions on $(0,\infty)$, there are $\delta>0$ and $a,b,c,d>0$
such that $ax\le g(x)\le b$ for all $x\in(0,\delta)$ and $c\le g(x^{-1})x^\theta \le dx$
for all $x\in(\delta^{-1},\infty)$. The latter restriction yields that
$cx^\theta\le g(x)\le dx^{\theta-1}$ for all $x\in(0,\delta)$. Combining this with the
former, we have $cx^\theta\le b$ and $ax\le dx^{\theta-1}$ for $x\in(0,\delta)$, which
implies that $\theta\ge0$ and $1\ge\theta-1$ so that $0<\theta\le2$ since $\theta\ne0$ by
assumption.

To prove the operator monotony of $f$, we may assume that $f$ is not a constant function.
Due to L\"owner's theorem (see \cite[V.4.7]{Bh}, \cite[Theorem 2.7.7]{Hi}), the
functions $g(x)$ on $(0,\infty)$ is analytically continued to a Pick function $g(z)$ on
$\bC^+\cup\bC^-$, where $\bC^+\,(\bC^-):=\{z\in\bC:\Im z>0\,(<0)\}$, so that
$g(\bC^+)\subset\bC^+$ and $g(\bC^-)\subset\bC^-$. Then $g(x^{-1})x^\theta$ can be
analytically continued to $g(z^{-1})z^\theta$ for $z\in\bC^+\cup\bC^-$, where $z^\theta$
is defined with the usual branch. By assumption, $g(z^{-1})z^\theta$ must be a Pick
function again. Now let $z=r^{-1}e^{i\eta}\in\bC^+$ with $r>0$ and $0<\eta<\pi$. Since
$$
g(re^{-i\eta})\cdot r^{-\theta}e^{i\theta\eta}=g(z^{-1})z^\theta\in\bC^+
$$
and
$$
g(re^{-i\eta})=g(\overline{r^2z})=\overline{g(r^2z)}=\overline{g(re^{i\eta})},
$$
we have $\overline{g(re^{i\eta})}e^{i\theta\eta}\in\bC^+$. Noting that the argument of
$\overline{g(re^{i\eta})}e^{i\theta\eta}$ changes continuously as $\eta$ changes in
$(0,\pi)$, we obtain $\theta\eta-\arg g(re^{i\eta})\ge0$ so that, thanks to $\theta>0$,
$$
\arg g(re^{i\eta})^{1/\theta}={1\over\theta}\arg g(re^{i\eta})\le\eta
$$
for all $\eta\in(0,\pi)$. This shows that $f(x)=g(x)^{1/\theta}$ is analytically continued
to a Pick function $g(z)^{1/\theta}$. Hence $f$ is operator monotone by L\"owner's theorem.
\end{proof}

\begin{lemma}\label{L-2.5}
Let $g$ be an operator convex function on $(0,\infty)$ such that $g(x)\ge0$ for all $x>0$
and $g$ is not identically zero. When $g(+0):=\lim_{x\searrow0}g(x)=0$, there are
$\delta>0$ and $a,b,c,d>0$ such that
$$
\begin{cases}
ax^2\le g(x)\le bx & \text{if $0<x<\delta$}, \\
cx\le g(x)\le dx^2 & \text{if $\delta^{-1}<x<\infty$}.
\end{cases}
$$
When $g(+0)\in(0,\infty]$, there are $\delta>0$ and $a,b,c,d>0$ such that
$$
\begin{cases}
a\le g(x)\le bx^{-1} & \text{if $0<x<\delta$}, \\
cx^{-1}\le g(x)\le dx^2 & \text{if $\delta^{-1}<x<\infty$}.
\end{cases}
$$
\end{lemma}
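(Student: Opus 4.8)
The plan is to reduce everything to the behaviour of $g$ near $0$, using the involutive transform $\tilde g(x):=xg(x^{-1})$, which is again operator convex and non-negative and satisfies $\tilde{\tilde g}=g$ together with $\tilde g(+0)=\lim_{x\to\infty}g(x)/x$. The point is that, since $g(x)=x\tilde g(x^{-1})$, any one-sided power bound for $g$ (or $\tilde g$) near $0$ converts into a power bound for the other near $\infty$; concretely $\tilde g(t)\le\beta t^2$ for large $t$ yields $g(x)=x\tilde g(x^{-1})\le\beta x^{-1}$ for small $x$, and conversely. Thus it suffices to control, for an arbitrary operator convex $g\ge0$ not identically zero, the two-sided near-$0$ behaviour and the growth at $\infty$; the near-$\infty$ assertions then follow by applying the near-$0$ analysis to $\tilde g$.

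For the case $g(+0)=0$ I would invoke the characterization that an operator convex function with value $\le0$ at $+0$ is exactly one of the form $g(x)=x\,h(x)$ with $h$ operator monotone on $(0,\infty)$ (see \cite{Bh,Hi}). Here $h=g/x\ge0$ and $h\not\equiv0$, so $h$ is non-decreasing, operator concave, and, being real analytic (a Pick function), strictly positive on all of $(0,\infty)$; moreover $h(x)/x$ is non-increasing, since $h$ is concave with $h(+0)\ge0$. From $g=xh$ one then reads off everything: near $0$, $h(x)\le h(\delta)$ gives $g(x)\le bx$ while $h(x)/x\ge h(\delta)/\delta$ gives $g(x)\ge ax^2$; near $\infty$, $h(x)\ge h(1)>0$ gives $g(x)\ge cx$ while the concavity bound $h(x)\le h(1)+h'(1)(x-1)\le Cx$ gives $g(x)\le dx^2$. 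This settles the first displayed case without recourse to the transform.

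For the case $g(+0)\in(0,\infty]$ the lower bound $g\ge a$ near $0$ is immediate from $\lim_{x\to0}g(x)=g(+0)>0$, and the upper bound $g\le dx^2$ near $\infty$ is the quadratic growth estimate below. The remaining two bounds I obtain from the transform: $g(x)=x\tilde g(x^{-1})$ gives $g(x)\le\beta x^{-1}$ near $0$ from quadratic growth of $\tilde g$, and for the near-$\infty$ lower bound I split on $\tilde g(+0)=\lim_{x\to\infty}g(x)/x$. If $\tilde g(+0)>0$ then $g(x)=x\tilde g(x^{-1})\ge\tfrac12\tilde g(+0)\,x\ge cx^{-1}$ for large $x$; if $\tilde g(+0)=0$ then $\tilde g$ falls under the previous case, so $\tilde g(t)\ge\tilde a\,t^2$ for small $t$ and hence $g(x)=x\tilde g(x^{-1})\ge\tilde a\,x^{-1}$.

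The one genuinely external ingredient is the quadratic growth bound: every operator convex $g\ge0$ satisfies $g(x)\le dx^2$ for large $x$. I would derive this from the integral representation of operator convex functions (\cite{Bh,Hi}), namely $g(x)=\alpha+\beta(x-1)+\gamma(x-1)^2+\int_{[0,\infty)}\frac{(x-1)^2}{x+s}\,d\mu(s)$ with $\gamma\ge0$ and $\int(1+s)^{-1}\,d\mu(s)<\infty$, together with the elementary estimate $\frac{(x-1)^2}{x+s}\le\frac{(x-1)^2}{1+s}$ valid for $x\ge1$, which bounds the integral term by a constant multiple of $(x-1)^2$. I expect this growth estimate, and relatedly the strict positivity of the operator monotone factor $h$ via analyticity (which is exactly what powers the linear lower bound $g\ge cx$ at $\infty$ in the first case), to be the main points; the rest is bookkeeping, taking the final $\delta$ to be the minimum of the finitely many thresholds produced above.
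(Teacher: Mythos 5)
Your proof takes a genuinely different route from the paper's. The paper works throughout with integral representations and explicit limit computations: for $g(+0)=0$ it uses the representation \eqref{F-2.7} and computes the limits of $g(x)/x$ and $g(x)/x^2$ at $0$ and at $\infty$; for $g(+0)\in(0,\infty]$ it uses a representation on $(0,2)$ to get the bounds near $0$, and near $\infty$ it splits on whether $g$ is non-increasing, invoking (a4) $\Rightarrow$ (a1) of Lemma \ref{L-2.2} together with the representation of operator monotone decreasing functions in one branch, and a shifted version of \eqref{F-2.7} in the other. Your treatment of the case $g(+0)=0$ via the Hansen--Pedersen factorization $g(x)=xh(x)$ with $h$ operator monotone \cite{HaPe,Bh} is correct and arguably cleaner: all four bounds fall out of the monotonicity of $h$, the monotonicity of $h(x)/x$ (concavity plus $h(+0)\ge0$), and strict positivity of $h$ (your analyticity argument works; even more simply, $h(x_1)=0$ would force $h\equiv0$ on $(0,x_1]$ by monotonicity and non-negativity, hence $h\equiv 0$ everywhere). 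The quadratic growth bound $g(x)\le dx^2$ at infinity, derived from the integral representation, is also correct.

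The one step you cannot simply assert is that $\tilde g(x):=xg(x^{-1})$ is again operator convex; you rely on this for two of the four bounds in the second case (the bound $g(x)\le bx^{-1}$ near $0$ and the bound $g(x)\ge cx^{-1}$ near $\infty$ in the subcase $\tilde g(+0)=0$). This statement is true, but it is not an off-the-shelf fact: it is precisely the equivalence (c4) $\Leftrightarrow$ (c5) of Theorem \ref{T-3.3}, which the authors of this paper explicitly flag as seemingly new, so quoting it as known is the genuine gap in your write-up. Fortunately it can be established independently of Lemma \ref{L-2.5}, in two lines, from the very representation you already invoke: for $s>0$ one has $x\cdot\frac{(1/x-1)^2}{1/x+s}=\frac{(x-1)^2}{1+sx}=s^{-1}\,\frac{(x-1)^2}{x+s^{-1}}$, the kernels $(x-1)^2/x$ and $(x-1)^2$ are swapped by the transform, affine functions go to affine functions, and the integrability condition $\int(1+s)^{-1}\,d\mu(s)<\infty$ is preserved under $s\mapsto s^{-1}$; alternatively it follows from the perspective-function theorem of \cite{Eff}. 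With that justification inserted (and noting that $\tilde g(+0)=\lim_{t\to\infty}g(t)/t$ exists in $[0,\infty]$ by convexity of $g$, so your case split is legitimate), your argument goes through and gives a complete proof.
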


\begin{proof}
Assume that $g(+0)=0$. Then $g'(+0):=\lim_{x\searrow0}g(x)/x$ exists in $[0,\infty)$. Hence
$g$ admits the integral expression
\begin{equation}\label{F-2.7}
g(x)=\beta x+\gamma x^2+\int_{(0,\infty)}{x^2(1+\lambda)\over x+\lambda}\,d\mu(\lambda),
\qquad x>0,
\end{equation}
where $\beta=g'(+0)\ge0$, $\gamma\ge0$ and $\mu$ is a finite positive measure on
$(0,\infty)$ (see \cite[V.5.5]{Bh}). By the monotone convergence and the Lebesgue
convergence theorems, we notice that
\begin{align*}
\lim_{x\searrow0}\int_{(0,\infty)}{1+\lambda\over x+\lambda}\,d\mu(\lambda)
&=\int_{(0,\infty)}{1+\lambda\over\lambda}\,d\mu(\lambda), \\
\lim_{x\searrow0}\int_{(0,\infty)}{x(1+\lambda)\over x+\lambda}\,d\mu(\lambda)&=0, \\
\lim_{x\to\infty}\int_{(0,\infty)}{x(1+\lambda)\over x+\lambda}\,d\mu(\lambda)
&=\int_{(0,\infty)}(1+\lambda)\,d\mu(\lambda), \\
\lim_{x\to\infty}\int_{(0,\infty)}{1+\lambda\over x+\lambda}\,d\mu(\lambda)&=0,
\end{align*}
which yield that
\begin{align}
\lim_{x\searrow0}{g(x)\over x^2}
&=\beta\cdot(+\infty)+\gamma+\int_{(0,\infty)}{1+\lambda\over\lambda}\,d\mu(\lambda),
\label{F-2.8}\\
\lim_{x\searrow0}{g(x)\over x}&=\beta, \nonumber\\
\lim_{x\to\infty}{g(x)\over x}
&=\beta+\gamma\cdot(+\infty)+\int_{(0,\infty)}(1+\lambda)\,d\mu(\lambda),
\label{F-2.9}\\
\lim_{x\to\infty}{g(x)\over x^2}&=\gamma. \nonumber
\end{align}
Note that each of \eqref{F-2.8} and \eqref{F-2.9} is strictly positive; otherwise,
$\beta=\gamma=0$ and $\mu$ is a zero measure so that $g$ is identically zero, contradicting
assumption. Now, the first assertion of the lemma follows immediately.

Next, assume that $g(+0)\in(0,\infty]$. We use other types of integral representations for
operator convex functions. The function $g(x+1)$ restricted on $(-1,1)$ admits the
expression
$$
g(x+1)=\alpha+\beta x+\int_{[-1,1]}{x^2\over1-\lambda x}\,d\mu(\lambda),
\qquad x\in(-1,1),
$$
with $\alpha,\beta\in\bR$ and $\mu$ is a finite positive measure on $[-1,1]$
(see \cite[V.4.6]{Bh}). So we write
$$
g(x)=\alpha+\beta(x-1)+\int_{[-1,1]}{(x-1)^2\over1+\lambda(1-x)}\,d\mu(\lambda),
\qquad x\in(0,2).
$$
Therefore,
$$
\lim_{x\searrow0}xg(x)=\mu(\{-1\}).
$$
which implies that $a\le g(x)\le bx^{-1}$ for some $a,b>0$ and for all sufficiently small
$x>0$. Finally, we examine the order of $g$ as $x\to\infty$. Assume that $g$ is
non-increasing on $(0,\infty)$. By (a4) $\Rightarrow$ (a1) of Lemma \ref{L-2.2}, $g$ is
operator monotone decreasing on $(0,\infty)$. Hence, as shown in \cite{Ha1} (also see the
proof of \cite[Theorem 3.1]{AH}), we have the expression
$$
g(x)=\alpha+\int_{[0,\infty)}{1+\lambda\over x+\lambda}\,d\mu(\lambda),
\qquad x>0,
$$
where $\alpha\ge0$ and $\mu$ is a finite positive measure on $[0,\infty)$. Since
$$
\lim_{x\to\infty}xg(x)=\alpha\cdot(+\infty)+\int_{[0,\infty)}(1+\lambda)\,d\mu(\lambda)
$$
is strictly positive, we have $cx^{-1}\le g(x)\le d$ for some $c,d>0$ and for all
sufficiently large $x>0$. Next, assume that $g$ is not entirely non-increasing on
$(0,\infty)$, so there is a $\kappa\in(0,\infty)$ such that $g'(\kappa)>0$. Then
$g(x+\kappa)-g(\kappa)$ on $(0,\infty)$ admits the same expression as \eqref{F-2.7}. Now,
as in the proof when $g(+0)=0$, one can show that $cx\le g(x)\le dx^2$ for some $c,d>0$
and for all $x>0$ large enough. Hence the second assertion of the lemma has been shown.
\end{proof}

\noindent
{\it Proof of {\rm(c)}.}\enspace
(vii) $\Rightarrow$ (i) was proved in \cite[Theorem 7]{HP3} based on joint concavity of
operator means \cite{KA} (note that $J_{A,B}^f$ is the operator obtained by applying the
operator mean associated with an operator monotone function $f$ to $R_B$ and $L_A$).


\medskip
(i) $\Rightarrow$ (vii).\enspace
Let $n\in\bN$ be arbitrary. For each $\xi\in\bC^n$ let $X_\xi:=[\xi\,0\,\cdots\,0]\in\bM_n$,
i.e., the first column of $X_\xi$ is $\xi$ and all other entries of $X$ are zero. When
$B=I_n$ and $X=X_\xi$, we have
\begin{equation}\label{F-2.10}
I_f^\theta(A,I_n,X_\xi)=\<X_\xi,f(L_A)^{-\theta}(X_\xi)\>_\HS
=\<X_\xi,f(A)^{-\theta}X_\xi\>_\HS=\<\xi,f(A)^{-\theta}\xi\>.
\end{equation}
Moreover, when $A=I_n$ and $X=X_\xi^t$, the transpose of $X_\xi$, we have
\begin{align}
I_f^\theta(I_n,B^t,X_\xi^t)
&=\<X_\xi^t,(f(R_{B^t}^{-1})R_{B^t})^{-\theta}(X_\xi^t)\>_\HS
=\<X_\xi^t,X_\xi^t((f(B^{-1})B)^{-\theta})^t\>_\HS \nonumber\\
&=\<X_\xi^t,((f(B^{-1})B)^{-\theta}X_\xi)^t\>_\HS
=\<X_\xi,(f(B^{-1})B)^{-\theta}X_\xi\>_\HS \nonumber\\
&=\<\xi,(f(B^{-1})B)^{-\theta}\xi\>.
\label{F-2.11}
\end{align}
Hence (i) implies that $\<\xi,f(A)^{-\theta}\xi\>$ and $\<\xi,(f(A^{-1})A)^{-\theta}\xi\>$
are jointly convex in $(A,\xi)\in\bP_n\times\bC^n$ for every $n\in\bN$, so by Lemma
\ref{L-2.3} they are jointly convex in $(A,\xi)\in B(\cH)^{++}\times\cH$ in the situation
of Lemma \ref{L-2.2}. Then by (a2) $\Rightarrow$ (a1) of Lemma \ref{L-2.2}, both
$f(x)^{-\theta}$ and $(f(x^{-1})x)^{-\theta}$ are operator monotone decreasing on
$(0,\infty)$, so both $f(x)^\theta$ and $(f(x^{-1})x)^\theta$ are operator monotone on
$(0,\infty)$. Hence Lemma \ref{L-2.4} implies that (viii) holds.

Now, it remains to prove that $\theta\le1$. To do so, define the function
$$
\ffi(x,y,z):=n^{-1}I_f^\theta(xI_n,yI_n,zI_n)
=\phi(x,y)^{-\theta}z^2,\qquad x,y,z>0,
$$
where $\phi(x,y):=f(xy^{-1})y$, and compute the Hessian of $\ffi$ as follows:
\begin{align}
&{\small
\det\bmatrix z^2\{\theta(\theta+1)\phi^{-\theta-2}\phi_x^2
-\theta\phi^{-\theta-1}\phi_{xx}\}&
z^2\{\theta(\theta+1)\phi^{-\theta-2}\phi_x\phi_y
-\theta\phi^{-\theta-1}\phi_{xy}\}&
-2\theta z\phi^{-\theta-1}\phi_x \\
z^2\{\theta(\theta+1)\phi^{-\theta-2}\phi_x\phi_y
-\theta\phi^{-\theta-1}\phi_{xy}\}&
z^2\{\theta(\theta+1)\phi^{-\theta-2}\phi_y^2
-\theta\phi^{-\theta-1}\phi_{yy}\}&
-2\theta z\phi^{-\theta-1}\phi_y \\
-2\theta z\phi^{-\theta-1}\phi_x&
-2\theta z\phi^{-\theta-1}\phi_y&
2\phi^{-\theta}\endbmatrix} \nonumber\\
&\qquad=2\theta^2z^4\phi^{-3\theta-4}
\det\bmatrix(\theta+1)\phi_x^2-\phi\phi_{xx}&
(\theta+1)\phi_x\phi_y-\phi\phi_{xy}&-2\theta \phi_x \\
(\theta+1)\phi_x\phi_y-\phi\phi_{xy}&
(\theta+1)\phi_y^2-\phi\phi_{yy}&-2\theta\phi_y \\
-\phi_x&-\phi_y&1\endbmatrix \nonumber\\
&\qquad=2\theta^2z^4\phi^{-3\theta-4}
\det\bmatrix-\phi\phi_{xx}&-\phi\phi_{xy}&-(\theta-1)\phi_x \\
-\phi\phi_{xy}&-\phi\phi_{yy}&-(\theta-1)\phi_y \\
-\phi_x&-\phi_y&1\endbmatrix \nonumber\\
&\qquad=2\theta^2z^4\phi^{-3\theta-3}
\det\bmatrix \phi_{xx}&\phi_{xy}&-(\theta-1)\phi_x \\
\phi_{xy}&\phi_{yy}&-(\theta-1)\phi_y \\
\phi_x&\phi_y&\phi\endbmatrix \nonumber\\
&\qquad=2\theta^2z^4\phi^{-3\theta-3}
\bigl\{(\theta-1)(\phi_x^2\phi_{yy}+\phi_y^2\phi_{xx}-2\phi_x\phi_y\phi_{xy})
+\phi(\phi_{xx}\phi_{yy}-\phi_{xy}^2)\bigr\}. \label{F-2.12}
\end{align}
We further compute
\begin{equation}\label{F-2.13}
\begin{cases}
\phi_x(x,y)=f'(xy^{-1}), \\
\phi_y(x,y)=f(xy^{-1})-f'(xy^{-1})(xy^{-1}), \\
\phi_{xx}(x,y)=f''(xy^{-1})y^{-1}, \\
\phi_{xy}(x,y)=-f''(xy^{-1})xy^{-2}, \\
\phi_{yy}(x,y)=f''(xy^{-1})(x^2y^{-3}).
\end{cases}
\end{equation}
Insert these formulas when $y=1$ into \eqref{F-2.12} to obtain the Hessian of $\ffi$ at
$(x,1,1)$ as
$$
2\theta^2(\theta-1)f(x)^{-3\theta-1}f''(x),
$$
which should be non-negative for any $x>0$. Suppose that $\theta>1$; then it must follow
that $f''(x)\ge0$, so $f$ is convex. Moreover, $f(x)^\theta$ is concave since it is
operator monotone (hence operator concave). Hence $f$ must be a constant function so that
$(f(x^{-1})x)^\theta$ is $x^\theta$ up to a multiple constant. This means that $x^\theta$
is operator monotone, which contradicts $\theta>1$. So we have $\theta\le1$.

\medskip
(i) $\Rightarrow$ (iii) is obvious since (i) implies (vii) as shown above.

\medskip
(ii) $\Rightarrow$ (iii).\enspace
Since the function
$$
\log I_f^\theta(xI_n,xI_n,I_n)=-\theta\log x-\theta\log f(1)+\log n
$$
is convex in $x>0$, we have $\theta>0$. Hence (iii) follows because a positive function is
convex if its logarithm is convex.

\medskip
(iii) $\Rightarrow$ (ii).\enspace
This can be proved in the same method adopted in \cite{Li} while we sketch the proof for
the convenience of the reader. Let $A_1,A_2,B_1,B_2\in\bP_n$ and $X\in\bM_n$ be arbitrary,
and define
$$
\phi(x):=I_f^\theta(x_1A_1+x_2A_2,x_1B_1+x_2B_2,X)
$$
for $x=(x_1,x_2)\in Q:=[0,\infty)^2\setminus\{(0,0)\}$. The function $\phi$ on $Q$ is
convex by (iii), and we need to prove that $\log\phi$ is convex. To do so, we may assume
that $X\ne0$ and hence $\phi(x)>0$ for all $x\in Q$. Since
$\log\phi(x)=\lim_{r\searrow0}(\phi(x)^r-1)/r$, it suffices to prove that $\phi(x)^r$ is
convex on $Q$ for any $r>0$. For each $\alpha>0$ define a convex set
$G_\alpha:=\{x\in Q:\phi(x)\le\alpha\}$; then $\phi(x)=\inf\{\alpha:x\in G_\alpha\}$. Since
$\phi(\lambda x)=\lambda^{-\theta}\phi(x)$ for all $\lambda>0$ as easily checked, we have
$G_\alpha=\alpha^{-1/\theta}G_1$. Let
$$
k(x):=\sup\{\mu>0:x\in G_{\mu^{-\theta}}\}=\sup\{\mu>0:\mu^{-1}x\in G_1\},
\qquad x\in Q.
$$
Then $k(x)$ is positive and concave on $Q$, and moreover we have
$$
k(x)=\sup\{\lambda^{-1/\theta}:x\in G_\lambda,\ \lambda>0\}=\phi(x)^{-1/\theta}.
$$
Therefore, for each $r>0$, $\phi(x)^r=k(x)^{-\theta r}$ is convex since $\theta r>0$.

\medskip
(ii) $\Rightarrow$ (viii).\enspace
Thanks to \eqref{F-2.10} and \eqref{F-2.11} it follows from (ii) that
$\log\<\xi,f(A)^{-\theta}\xi\>$ and $\log\<\xi,(f(A^{-1})A)^{-\theta}\xi\>$ are convex in
$A\in\bP_n$ for any fixed $\xi\in\bC_n$ and for every $n$. Similarly to the proof of Lemma
\ref{L-2.3}, we see that $\log\<\xi,f(A)^{-\theta}\xi\>$ and
$\log\<\xi,(f(A^{-1})A)^{-\theta}\xi\>$ are convex in $A\in B(\cH)^{++}$ for every
$\xi\in\cH$ in the situation of Lemma \ref{L-2.2}. So, (a3) $\Rightarrow$ (a1) of Lemma
\ref{L-2.2} yields that both $f(x)^{-\theta}$ and $(f(x^{-1})x)^{-\theta}$ are operator
monotone decreasing on $(0,\infty)$, that is, both $f(x)^\theta$ and $(f(x^{-1})x)^\theta$
are operator monotone on $(0,\infty)$. Hence Lemma \ref{L-2.4} implies that (viii) holds.
\qed

\bigskip\noindent
{\it Proof of {\rm(d)}.}\enspace
(vii) $\Rightarrow$ (v).\enspace
As mentioned in the proof of \cite[Theorem 7]{HP3} we have joint concavity of
$(A,B)\in\bP_n\times\bP_n\mapsto J_{A,B}^f$, that is, for every $A_1,A_2,B_1,B_2\in\bP_n$,
$$
J_{{A_1+A_2\over2},{B_1+B_2\over2}}^f\ge{J_{A_1,B_1}^f+J_{A_2,B_2}^f\over2}.
$$
Operator monotony and operator concavity of $x^\theta$ on $(0,\infty)$ give
$$
\biggl(J_{{A_1+A_2\over2},{B_1+B_2\over2}}^f\biggr)^\theta
\ge\Biggl({J_{A_1,B_1}^f+J_{A_2,B_2}^f\over2}\Biggr)^\theta
\ge{\bigl(J_{A_1,B_1}^f\bigr)^\theta+\bigl(J_{A_2,B_2}^f\bigr)^\theta\over2},
$$
which implies (v).

\medskip
(v) $\Rightarrow$ (vi) is trivial because the logarithm of a positive concave function is
concave.

\medskip
(vi) $\Rightarrow$ (viii).\enspace
The proof is similar to the above proof of (ii) $\Rightarrow$ (viii) of (c). With $-\theta$
in place of $\theta$ and with ``concave" in place of ``convex", we see from (vi) that
$\log\<\xi,f(A)^\theta\xi\>$ and $\log\<\xi,(f(A^{-1})A)^\theta\xi\>$ are concave in
$A\in B(\cH)^{++}$ for every $\xi\in\cH$. Hence by (b2) $\Rightarrow$ (b1) of Lemma
\ref{L-2.2}, both $f(x)^\theta$ and $(f(x^{-1})x)^\theta$ are operator monotone on
$(0,\infty)$, so Lemma \ref{L-2.4} implies (viii).

\medskip
(v) $\Rightarrow$ (vii).\enspace
Since (v) $\Rightarrow$ (viii) is already known, it remains to prove that $\theta\le1$. But
this is immediately seen because the function
$$
I_f^{-\theta}(xI_n,xI_n,I_n)=nf(1)^\theta x^\theta
$$
is concave in $x>0$.\qed

\bigskip\noindent
{\it Proof of {\rm(e)}.}\enspace
(iii) $\Rightarrow$ (iv) is trivial. Finally, we prove that (iv) implies the restriction
that $\theta\in[-2,-1]\cup(0,2]$. Since the function
$$
I_f^\theta(xI_n,xI_n,I_n)=nf(1)^{-\theta}x^{-\theta}
$$
is convex in $x>0$, we have $\theta\in(-\infty,-1]\cup(0,\infty)$. When $\theta>0$, (iv)
is (iii), so $\theta\in(0,2]$ follows from (iii) $\Rightarrow$ (viii). Now assume that
$\theta<0$. We need to prove that $\theta\ge-2$. Thanks to \eqref{F-2.10} and
\eqref{F-2.11} we see from (iv) that both $f(x)^{-\theta}$ and $(f(x^{-1})x)^{-\theta}$ are
operator convex on $(0,\infty)$. Let $g(x):=f(x)^{-\theta}$ and so
$(f(x^{-1})x)^{-\theta}=g(x^{-1})x^{-\theta}$. By Lemma \ref{L-2.5} one can choose
$\delta>0$ and $a,b,c,d>0$ such that either
\begin{equation}\label{F-2.14}
\begin{cases}
ax^2\le g(x)\le bx & \text{if $0<x<\delta$}, \\
cx\le g(x)\le dx^2 & \text{if $\delta^{-1}<x<\infty$},
\end{cases}
\end{equation}
or
\begin{equation}\label{F-2.15}
\begin{cases}
a\le g(x)\le bx^{-1} & \text{if $0<x<\delta$}, \\
cx^{-1}\le g(x)\le dx^2 & \text{if $\delta^{-1}<x<\infty$},
\end{cases}
\end{equation}
and also either
\begin{equation}\label{F-2.16}
\begin{cases}
ax^2\le g(x^{-1})x^{-\theta}\le bx & \text{if $0<x<\delta$}, \\
cx\le g(x^{-1})x^{-\theta}\le dx^2 & \text{if $\delta^{-1}<x<\infty$},
\end{cases}
\end{equation}
or
\begin{equation}\label{F-2.17}
\begin{cases}
a\le g(x^{-1})x^{-\theta}\le bx^{-1} & \text{if $0<x<\delta$}, \\
cx^{-1}\le g(x^{-1})x^{-\theta}\le dx^2 & \text{if $\delta^{-1}<x<\infty$},
\end{cases}
\end{equation}
Assume that \eqref{F-2.14} and \eqref{F-2.17} are satisfied. Since \eqref{F-2.17} is
rephrased as
$$
\begin{cases}
ax^{-\theta}\le g(x)\le bx^{-\theta+1} & \text{if $\delta^{-1}<x<\infty$}, \\
cx^{-\theta+1}\le g(x)\le dx^{-\theta-2} & \text{if $0<x<\delta$},
\end{cases}
$$
we have $ax^{-\theta}\le dx^2$ for $\delta^{-1}<x<\infty$, which yields that $\theta\ge-2$.
Similarly, we have $\theta\ge-2$ from \eqref{F-2.15} and \eqref{F-2.16}, and also from
\eqref{F-2.15} and \eqref{F-2.17}. This argument does not work when \eqref{F-2.14} and
\eqref{F-2.16} are satisfied. So we take a detour to settle this last case. Since the
function
$$
n^{-1}I_f^\theta(xI_n,yI_n,I_n)=\phi(x,y)^{-\theta}
\quad\mbox{with}\quad\phi(x,y):=f(xy^{-1})y
$$
is jointly convex in $x,y>0$, the Hessian of $\phi(x,y)^{-\theta}$ is non-negative so that
$$
\theta^2\phi^{-2\theta-4}\det\bmatrix(\theta+1)\phi_x^2-\phi\phi_{xx}&
(\theta+1)\phi_x\phi_y-\phi\phi_{xy} \\
(\theta+1)\phi_x\phi_y-\phi\phi_{xy}&
(\theta+1)\phi_y^2-\phi\phi_{yy}
\endbmatrix\ge0.
$$
From \eqref{F-2.13} with $y=1$ we notice that the Hessian of $\phi(x,y)^{-\theta}$ at
$(x,1)$ is
$$
-\theta^2(\theta+1)f(x)^{-2\theta-1}f''(x),
$$
which should be non-negative for all $x>0$. Now assume that $\theta<-1$; then we have
$f''(x)\ge0$ for all $x>0$ so that $f$ is convex on $(0,\infty)$. In the case of
\eqref{F-2.14} with negative $\theta$, we have $f(+0)=0$ and so convexity of $f$ yields
that $f(x)\ge\eps x$ for some $\eps>0$ and for all sufficiently large $x$. From this and
\eqref{F-2.14}, $\eps^{-\theta}x^{-\theta}\le dx^2$ for large $x$, which yields that
$\theta\ge-2$.\qed

\bigskip
It is remarkable that all the convexity/concavity conditions (i)--(vi) except (iv) sit
between (vii) and (viii), and the difference between the last two is only the range
$(0,1]$ or $(0,2]$ of the parameter $\theta$. The equivalence of (i), (v) and (vii) is
also remarkable. It is worth noting that joint concavity (v) of $I_f^\theta(A,B,X)$ in
$A,B\in\bP_n$ occurs only when $\theta\in[-2,0)$ while stronger version (i) or (ii) of
joint convexity does only when $\theta\in(0,2]$.

The following two examples show that implications in (c)--(e) of Theorem \ref{T-2.1} are
almost best possible results.

\begin{example}\label{E-2.6}\rm
Let $f(x):=\sqrt x$ on $(0,\infty)$. According to \cite[Corollary 8.1\,(2)]{Li} the
function
$$
\log I_f^\theta(A,B,X)=\log\Tr X^*A^{-\theta/2}XB^{-\theta/2}
$$
is jointly convex in $(A,B)\in\bP_n\times\bP_n$ for any $\theta\in(0,2]$. Hence (ii)
($\Leftrightarrow$ (iii)) does not imply (vii), and the restriction $\theta\in(0,2]$ from
(ii) is best possible.
\end{example}

\begin{example}\label{E-2.7}\rm
Let $f(x):=x^\alpha$ on $(0,\infty)$, where $\alpha\in\bR$. Recall that the function
$$
I_f^\theta(A,B,X)=\Tr X^*A^{-\alpha\theta}XB^{-(1-\alpha)\theta}
$$
is jointly convex in $(A,B)\in\bP_n\times\bP_n$ if and only if
$(A,B)\in\bP_n\times\bP_n\mapsto A^{-\alpha\theta}\otimes B^{-(1-\alpha)\theta}$ is jointly
convex. According to \cite[p.\ 221, Remark (4)]{An} it is easy to see that this joint
convexity holds if and only if one of the following cases is satisfied:
\begin{itemize}
\item $0\le\alpha\le1$ and
$0<\theta\le\min\bigl\{{1\over\alpha},{1\over1-\alpha}\bigr\}$,
\item $1\le\alpha\le2$ and
$-\min\bigl\{{2\over\alpha},{1\over\alpha-1}\bigr\}\le\theta\le-1$,
\item $-1\le\alpha\le0$ and
$-\min\bigl\{{2\over1-\alpha},{1\over-\alpha}\bigr\}\le\theta\le-1$,
\end{itemize}
where ${1\over0}:=+\infty$ by convention. In particular, (iv) is satisfied when
$\alpha=1/2$ and $\theta\in(0,2]$ or when $\alpha=1$ and $\theta\in[-2,-1]$. Hence the
restriction $\theta\in[-2,-1]\cup(0,2]$ from (iv) is best possible. Also, note that (viii)
does not imply (iv).
\end{example}

\section{Monotonicity of $I_f^\theta(A,B,X)$ and convexity of quasi-entropy}
\setcounter{equation}{0}

A subalgebra of $\bM_n$ means a unital $*$-subalgebra. Given a subalgebra $\cA$ of $\bM_n$
we have the trace-preserving {\it conditional expectation} $E_\cA:\bM_n\to\cA$, which is
determined by
\begin{equation}\label{F-3.1}
\Tr E_\cA(X)Y=\Tr XY,\qquad X\in\bM_n,\ Y\in\cA.
\end{equation}
For our purpose it is convenient to express $E_\cA$ as an average of unitary conjugations.
Let $\cA'$ is the commutant of $\cA$, i.e., $\cA':=\{X\in\bM_n:XY=YX,\,Y\in\cA\}$, and
$\cU(\cA')$ be the set of all unitaries in $\cA'$. Since $\cU(\cA')$ is a compact group,
we have the Haar probability measure on $\cU(\cA')$, which is simply denoted by $dU$. We
then have
\begin{equation}\label{F-3.2}
E_\cA(X)=\int_{\cU(\cA')}UXU^*\,dU,\qquad X\in\bM_n.
\end{equation}
In fact, it is easy to verify that this $E_\cA(X)$ belongs to $\cA$ and satisfies
\eqref{F-3.1}.

For each $n_1,n_2\in\bN$ the $n_1n_2\times n_1n_2$ complex matrix algebra $\bM_{n_1n_2}$
is considered as the tensor product $\bM_{n_1}$ and $\bM_{n_2}$, i.e.,
$\bM_{n_1n_2}=\bM_{n_1}\otimes\bM_{n_2}$. Under this identification a linear map
$\pTr:\bM_{n_1n_2}\to\bM_{n_1}$, called the {\it partial trace}, is determined by
$$
\pTr(X_1\otimes X_2)=\Tr(X_2)X_1,\qquad X_1\in\bM_{n_1},\ X_2\in\bM_{n_2},
$$
which traces out the second factor. Note that $n_2^{-1}\pTr$ is the trace-preserving
conditional expectation from $\bM_{n_1n_2}$ onto the subalgebra
$\cA:=\bM_{n_1}\otimes I_2$, where $I_2$ is the identity of $\bM_{n_2}$.

Given a function $f>0$ on $(0,\infty)$ and $\theta\in\bR\setminus\{0\}$, we consider the
following properties of the function $I_f^\theta$ given in \eqref{F-1.1} concerning
monotonicity under conditional expectations or partial traces:
\begin{itemize}
\item[(I)] for every $n\in\bN$ and any subalgebra $\cA$ of $\bM_n$,
$$
I_f^\theta(E_\cA(A),E_\cA(B),E_\cA(X))\le I_f^\theta(A,B,X),
\qquad A,B\in\bP_n,\ X\in\bM_n,
$$
\item[(IV)] for every $n\in\bN$ and any subalgebra $\cA$ of $\bM_n$,
$$
I_f^\theta(E_\cA(A),E_\cA(B),X)\le I_f^\theta(A,B,X),
\qquad A,B\in\bP_n,\ X\in\cA,
$$
\item[(V)] for every $n\in\bN$ and any subalgebra $\cA$ of $\bM_n$,
$$
I_f^{-\theta}(E_\cA(A),E_\cA(B),X)\ge I_f^{-\theta}(A,B,X),
\qquad A,B\in\bP_n,\ X\in\cA,
$$
\item[(I$'$)] for every $n_1,n_2\in\bN$,
$$
n_2^{\theta-1}I_f^\theta(\pTr A,\pTr B,\pTr X)\le I_f^\theta(A,B,X),
\qquad A,B\in\bP_{n_1n_2},\ X\in\bM_{n_1n_2},
$$
\item[(IV$'$)] for every $n_1,n_2\in\bN$,
$$
n_2^{\theta+1}I_f^\theta(\pTr A,\pTr B,X)\le I_f^\theta(A,B,X\otimes I_2),
\qquad A,B\in\bP_{n_1n_2},\ X\in\bM_{n_1},
$$
\item[(V$'$)] for every $n_1,n_2\in\bN$,
$$
n_2^{1-\theta}I_f^{-\theta}(\pTr A,\pTr B,X)\ge I_f^{-\theta}(A,B,X\otimes I_2),
\qquad A,B\in\bP_{n_1n_2},\ X\in\bM_{n_1}.
$$
\end{itemize}

\begin{thm}\label{T-3.1}
Concerning the above properties and those in Section 2 the following hold:
$$
\mbox{\rm(i)}\Leftrightarrow\mbox{\rm(I)}\Leftrightarrow\mbox{\rm(I$'$)},\quad
\mbox{\rm(iv)}\Leftrightarrow\mbox{\rm(IV)}\Leftrightarrow\mbox{\rm(IV$'$)},\quad
\mbox{\rm(v)}\Leftrightarrow\mbox{\rm(V)}\Leftrightarrow\mbox{\rm(V$'$)}.
$$
\end{thm}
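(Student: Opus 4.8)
The plan is to prove each of the three displayed equivalences as a cyclic chain of implications,
$$\mbox{\rm(i)}\Rightarrow\mbox{\rm(I)}\Rightarrow\mbox{\rm(I$'$)}\Rightarrow\mbox{\rm(i)},\qquad
\mbox{\rm(iv)}\Rightarrow\mbox{\rm(IV)}\Rightarrow\mbox{\rm(IV$'$)}\Rightarrow\mbox{\rm(iv)},\qquad
\mbox{\rm(v)}\Rightarrow\mbox{\rm(V)}\Rightarrow\mbox{\rm(V$'$)}\Rightarrow\mbox{\rm(v)},$$
so that no direct passage between a general conditional expectation and a partial trace is needed. The one structural fact I would isolate first is the unitary invariance
$$I_f^\theta(UAU^*,UBU^*,UXU^*)=I_f^\theta(A,B,X)\qquad(U\ \text{unitary}).$$
This holds because conjugation $\Phi_U(Y):=UYU^*$ is unitary on $(\bM_n,\<\cdot,\cdot\>_\HS)$ and intertwines $L_{UAU^*}=\Phi_UL_A\Phi_U^*$, $R_{UBU^*}=\Phi_UR_B\Phi_U^*$; hence $(J_{UAU^*,UBU^*}^f)^{-\theta}=\Phi_U(J_{A,B}^f)^{-\theta}\Phi_U^*$ by functional calculus, and the Hilbert--Schmidt inner product is $\Phi_U$-invariant.

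For the first arrow of each chain I would combine this invariance with the averaging formula \eqref{F-3.2}. Writing $E_\cA(\cdot)=\int_{\cU(\cA')}U(\cdot)U^*\,dU$ and using that $\bP_n\times\bP_n\times\bM_n$ is convex while $I_f^\theta$ is continuous, Jensen's inequality gives
$$I_f^\theta(E_\cA(A),E_\cA(B),E_\cA(X))\le\int_{\cU(\cA')}I_f^\theta(UAU^*,UBU^*,UXU^*)\,dU=I_f^\theta(A,B,X),$$
which is exactly (I) under hypothesis (i). For (iv) $\Rightarrow$ (IV) and (v) $\Rightarrow$ (V) the same computation applies with $X\in\cA$ held fixed (so $UXU^*=X$ for $U\in\cU(\cA')$), invoking convexity of $(A,B)\mapsto I_f^\theta(A,B,X)$ in the first case and the reversed Jensen inequality from concavity of $(A,B)\mapsto I_f^{-\theta}(A,B,X)$ in the second.

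For the second arrow I would specialize to the partial trace, using that $n_2^{-1}\pTr$ is the conditional expectation onto $\cA=\bM_{n_1}\otimes I_2$, so that $E_\cA(A)=(n_2^{-1}\pTr A)\otimes I_2$ and similarly for $B,X$. Two elementary identities do the work: the tensor relation $I_f^\theta(Y_1\otimes I_2,Y_2\otimes I_2,Z\otimes I_2)=n_2\,I_f^\theta(Y_1,Y_2,Z)$ (from $L_{Y\otimes I_2}=L_Y\otimes\mathrm{id}$, $R_{Y\otimes I_2}=R_Y\otimes\mathrm{id}$ and $\<I_2,I_2\>_\HS=n_2$) and the homogeneity relations $I_f^\theta(cA,cB,X)=c^{-\theta}I_f^\theta(A,B,X)$, $I_f^\theta(A,B,cX)=c^2I_f^\theta(A,B,X)$. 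Feeding $E_\cA(A)=(n_2^{-1}\pTr A)\otimes I_2$ and its analogues into (I) and collecting the factors should produce the coefficient $n_2^{\theta-1}$ of (I$'$); taking $X'=X\otimes I_2\in\cA$ in (IV) should produce $n_2^{\theta+1}$ of (IV$'$), and the same substitution in (V) the coefficient $n_2^{1-\theta}$ of (V$'$), with the inequality reversed as required.

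Finally, to close each chain I would run the second-arrow computation backwards at $n_2=2$. For $A_1,A_2,B_1,B_2\in\bP_n$ and $X_1,X_2\in\bM_n$ I form $A=A_1\oplus A_2$, $B=B_1\oplus B_2$, $X=X_1\oplus X_2$ in $\bM_{2n}=\bM_n\otimes\bM_2$, for which $\pTr A=A_1+A_2$, $\pTr B=B_1+B_2$, $\pTr X=X_1+X_2$ and, by the block decomposition \eqref{F-2.2} used in the proof of (a), $I_f^\theta(A,B,X)=I_f^\theta(A_1,B_1,X_1)+I_f^\theta(A_2,B_2,X_2)$. Applying (I$'$) and rewriting the partial-trace term through the homogeneity relations in terms of the midpoints, the powers of $2$ cancel and leave precisely midpoint joint convexity, which upgrades to (i) by continuity of $I_f^\theta$. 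The implications (IV$'$) $\Rightarrow$ (iv) and (V$'$) $\Rightarrow$ (v) are identical, now with $X\otimes I_2=X\oplus X$ in the third slot and, for the latter, all inequalities reversed. I expect the only genuine hazard to be the exact bookkeeping of the exponents $n_2^{\theta-1}$, $n_2^{\theta+1}$, $n_2^{1-\theta}$ forced by the homogeneity and tensor identities; once these are checked against the normalizations in (I$'$), (IV$'$), (V$'$), the remainder is Jensen's inequality together with the standard passage from midpoint to full convexity.
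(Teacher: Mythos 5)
Your proposal is correct and takes essentially the same route as the paper: the same cyclic chains {\rm(i)} $\Rightarrow$ {\rm(I)} $\Rightarrow$ {\rm(I$'$)} $\Rightarrow$ {\rm(i)}, the same Haar-average representation \eqref{F-3.2} of $E_\cA$ combined with Jensen's inequality and unitary invariance for the first arrow, the same conditional-expectation/tensor/homogeneity bookkeeping yielding the factors $n_2^{\theta-1}$, $n_2^{\theta+1}$, $n_2^{1-\theta}$ for the second, and the same block-diagonal specialization at $n_2=2$ recovering midpoint convexity for the third. The paper writes out only the chain for {\rm(i)}, {\rm(I)}, {\rm(I$'$)} and declares the other two analogous; the adaptations you supply (taking $X\in\cA$ fixed so that $UXU^*=X$ for $U\in\cU(\cA')$, and reversing the inequalities in the concave case) are exactly the intended ones.
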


\begin{proof}
We will prove only the equivalence of (i), (I) and (I$'$) since other statements can
similarly be proved.

\medskip
(i) $\Rightarrow$ (I).\enspace
Thanks to \eqref{F-3.2} this is seen as follows:
\begin{align*}
&I_f^\theta(E_\cA(A),E_\cA(B),E_\cA(X)) \\
&\qquad=I_f^\theta\biggl(\int_{\cU(\cA')}UAU^*\,dU,\int_{\cU(\cA')}UBU^*\,dU,
\int_{\cU(\cA')}UXU^*\,dU\biggr) \\
&\qquad\le\int_{\cU(\cA')}I_f^\theta(UAU^*,UBU^*,UXU^*)\,dU
=I_f^\theta(A,B,X).
\end{align*}
Here, it is obvious that $I_f^\theta(UAU^*,UBU^*,UXU^*)$ is continuous (hence integrable)
in $U\in\cU(\cA')$.

\medskip
(I) $\Rightarrow$ (I$'$) is immediate since $X\mapsto{n_2^{-1}}\pTr X\otimes I_2$ is the
conditional expectation from $\bM_{n_1n_2}$ onto $\bM_1\otimes I_2$ and
$$
I_f^\theta(n_2^{-1}\pTr A\otimes I_2,n_2^{-1}\pTr B\otimes I_2,n_2^{-1}\pTr X\otimes I_2)
=n_2^{\theta-1}I_f^\theta(\pTr A,\pTr B,\pTr X).
$$

\medskip
(I$'$) $\Rightarrow$ (i).\enspace
For $A_1,A_2,B_1,B_2\in\bP_n$ and $X_1,X_2\in\bM_n$ set
$$
A:=\bmatrix A_1&0\\0&A_2\endbmatrix,\quad
B:=\bmatrix B_1&0\\0&B_2\endbmatrix,\quad
X:=\bmatrix X_1&0\\0&X_2\endbmatrix\quad\mbox{in $\bM_n\otimes\bM_2$}.
$$
Since $\pTr A=A_1+A_2$, $\pTr B=B_1+B_2$ and $\pTr X=X_1+X_2$, (I$'$) for $n_2=2$ implies
that
$$
2^{\theta-1}I_f^\theta(A_1+A_2,B_1+B_2,X_1+X_2)\le I_f^\theta(A,B,X).
$$
The above left-hand side is
$$
2I_f^\theta\biggl({A_1+A_2\over2},{B_1+B_2\over2},{X_1+X_2\over2}\biggr)
$$
while the right-hand side is $I_f^\theta(A_1,B_1,X_1)+I_f^\theta(A_2,B_2,X_2)$. Hence (i)
follows.
\end{proof}

The proof of the above (i) $\Rightarrow$ (I) is similar to those in \cite{CL,JR} where the
method of representing a partial trace as an average of unitary conjugations was used.

The next corollary is immediate from Theorems \ref{T-2.1} and \ref{T-3.1}.

\begin{cor}
For every operator monotone function $f>0$ on $(0,\infty)$ and for every $\theta\in(0,1]$,
all the properties {\rm(i)--(vi)}, {\rm(I)}, {\rm(IV)}, {\rm(V)}, {\rm(I$'$)}, {\rm(IV$'$)}
and {\rm(V$'$)} hold.
\end{cor}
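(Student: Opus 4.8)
The plan is to recognize that the hypothesis of the corollary---namely that $f$ is operator monotone on $(0,\infty)$ together with $\theta\in(0,1]$---is verbatim condition {\rm(vii)} of Theorem \ref{T-2.1}. Once this identification is made, the entire statement reduces to chasing the implications already established in Theorems \ref{T-2.1} and \ref{T-3.1}, and no genuinely new argument is required.

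First I would read off the convexity/concavity properties from Theorem \ref{T-2.1}. By part (c) we have {\rm(vii)} $\Leftrightarrow$ {\rm(i)} $\Rightarrow$ {\rm(ii)} $\Leftrightarrow$ {\rm(iii)}, which under our hypothesis yields {\rm(i)}, {\rm(ii)} and {\rm(iii)}. By part (d) we have {\rm(vii)} $\Leftrightarrow$ {\rm(v)} $\Rightarrow$ {\rm(vi)}, giving {\rm(v)} and {\rm(vi)}. Finally, part (e) supplies {\rm(iii)} $\Rightarrow$ {\rm(iv)}, so {\rm(iv)} holds as well. Thus all of {\rm(i)}--{\rm(vi)} are in force.

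Next I would transfer these to the monotonicity statements using Theorem \ref{T-3.1}, which asserts the equivalences {\rm(i)} $\Leftrightarrow$ {\rm(I)} $\Leftrightarrow$ {\rm(I$'$)}, {\rm(iv)} $\Leftrightarrow$ {\rm(IV)} $\Leftrightarrow$ {\rm(IV$'$)} and {\rm(v)} $\Leftrightarrow$ {\rm(V)} $\Leftrightarrow$ {\rm(V$'$)}. Since {\rm(i)}, {\rm(iv)} and {\rm(v)} have just been verified, these equivalences immediately deliver {\rm(I)}, {\rm(I$'$)}, {\rm(IV)}, {\rm(IV$'$)}, {\rm(V)} and {\rm(V$'$)}, which exhausts the list in the corollary.

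There is essentially no obstacle here: the result is a pure corollary obtained by combining the two main theorems, and the only point requiring care is the bookkeeping---in particular, observing that the hypothesis matches condition {\rm(vii)} (the restrictive range $\theta\in(0,1]$) rather than the weaker {\rm(viii)}, so that the strong equivalences {\rm(vii)} $\Leftrightarrow$ {\rm(i)} and {\rm(vii)} $\Leftrightarrow$ {\rm(v)} may be invoked directly to seed the chain above.
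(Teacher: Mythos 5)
Your proof is correct and is exactly the argument the paper intends: the paper dispatches this corollary as ``immediate from Theorems \ref{T-2.1} and \ref{T-3.1},'' and your chain---identifying the hypothesis with (vii), deducing (i)--(vi) via parts (c), (d), (e) of Theorem \ref{T-2.1}, and then invoking the equivalences of Theorem \ref{T-3.1} to obtain (I), (IV), (V), (I$'$), (IV$'$), (V$'$)---is precisely that bookkeeping made explicit.
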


When $\theta<0$, the function
$$
I_f^\theta(A,B,X)=\<X,(f(L_AR_B^{-1})R_B)^{-\theta}X\>_\HS,
\qquad A,B\in\bP_n,\ X\in\bM_n,
$$
is well defined when $f$ is a real (not necessarily positive) function on $(0,\infty)$.
In particular, when $\theta=-1$, $I_f^{-1}(A,B,X)$ is the quasi-entropy $S_f^X(A\|B)$ (see
Section 1.2).

\begin{thm}\label{T-3.3}
Let $f$ be a real function on $(0,\infty)$. Then the following conditions are equivalent:
\begin{itemize}
\item[\rm(c1)] $(A,B)\in\bP_n\times\bP_n\mapsto S_f^X(A\|B)$ is jointly convex for any
fixed $X\in\bM_n$ and for every $n\in\bN$;
\item[\rm(c2)] for every $n\in\bN$ and any subalgebra $\cA$ of $\bM_n$,
$$
S_f^X(E_\cA(A)\|E_\cA(B))\le S_f^X(A\|B),
\qquad A,B\in\bP_n,\ X\in\cA;
$$
\item[\rm(c3)] for every $n_1,n_2\in\bN$,
$$
S_f^X(\pTr A\|\pTr B)\le S_f^{X\otimes I_2}(A\|B),
\qquad A,B\in\bP_{n_1n_2},\ X\in\bM_{n_1};
$$
\item[\rm(c4)] $f$ is operator convex on $(0,\infty)$;
\item[\rm(c5)] $f(x^{-1})x$ is operator convex on $(0,\infty)$.
\end{itemize}
\end{thm}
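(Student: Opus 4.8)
The plan is to arrange all five conditions around (c1), noting at the outset that the quasi-entropy is exactly $S_f^X(A\|B)=I_f^{-1}(A,B,X)$, so we are in the case $\theta=-1$, but now with $f$ merely real rather than strictly positive. First I would identify (c1), (c2), (c3) with the properties (iv), (IV), (IV$'$) of Theorem \ref{T-3.1} specialized to $\theta=-1$ (the constant $n_2^{\theta+1}$ in (IV$'$) becomes $n_2^0=1$). The equivalences (iv) $\Leftrightarrow$ (IV) $\Leftrightarrow$ (IV$'$) were established there, and the crucial remark is that their proofs use only the unitary invariance of $I_f^\theta$, the averaging representation \eqref{F-3.2} of a conditional expectation together with $E_\cA(X)=X$ for $X\in\cA$, and the $2\times2$ block-matrix identity; none of these invokes $f>0$. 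Indeed, for $\theta=-1$ the relevant operator is $J_{A,B}^f$ itself, not a negative power, and this is well defined and self-adjoint for any real $f$. Hence (c1) $\Leftrightarrow$ (c2) $\Leftrightarrow$ (c3) transfers verbatim.

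Next I would prove (c1) $\Leftrightarrow$ (c4). The implication (c4) $\Rightarrow$ (c1) is the classical joint convexity of the quasi-entropy for operator convex $f$ from \cite{Pe1}. For the converse (c1) $\Rightarrow$ (c4) I restrict the joint convexity to the slice $B=I_n$ with $X=X_\xi$: by \eqref{F-2.10} with $\theta=-1$ one has $I_f^{-1}(A,I_n,X_\xi)=\langle\xi,f(A)\xi\rangle$, so (c1) forces $A\in\bP_n\mapsto\langle\xi,f(A)\xi\rangle$ to be convex for every $\xi\in\bC^n$ and every $n$. The decisive point is the elementary observation that the operator inequality $f(\lambda A+(1-\lambda)B)\le\lambda f(A)+(1-\lambda)f(B)$ is equivalent to its scalar form holding for all $\xi$; thus convexity of $A\mapsto\langle\xi,f(A)\xi\rangle$ for all $\xi$ and all $n$ is precisely operator convexity of $f$, which is (c4). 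No positivity of $f$ enters, and \eqref{F-2.10} remains valid since the exponent $-\theta=1$ removes the need for $f>0$.

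Finally I would obtain (c1) $\Leftrightarrow$ (c5) by duality. The identity \eqref{F-2.4} with $\theta=-1$ reads $I_f^{-1}(A,B,X)=I_{\tilde f}^{-1}(B,A,X^*)$, where $\tilde f(x)=xf(x^{-1})$, and it too holds for real $f$ since only the first power of $J$ is involved. Because joint convexity is insensitive to relabeling the two arguments and $X^*$ ranges over all of $\bM_n$ as $X$ does, this identity shows that (c1) for $f$ is equivalent to (c1) for $\tilde f$; by the already-proved equivalence (c1) $\Leftrightarrow$ (c4) applied to $\tilde f$, the latter is exactly operator convexity of $\tilde f$, i.e.\ (c5). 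Alternatively, (c4) $\Leftrightarrow$ (c5) is the standard fact that $f\mapsto\tilde f$ preserves operator convexity, and one may instead read off (c1) $\Rightarrow$ (c5) directly from \eqref{F-2.11}, which gives $I_f^{-1}(I_n,B^t,X_\xi^t)=\langle\xi,\tilde f(B)\xi\rangle$ and, since $B\mapsto B^t$ is a convexity-preserving linear bijection of $\bP_n$, the operator convexity of $\tilde f$.

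I expect the main difficulty to lie not in any single computation but in the bookkeeping that the standing hypothesis $f>0$ of Section 2 and Theorem \ref{T-3.1} may genuinely be dropped once $\theta=-1$ is fixed; one must check that every borrowed step---the block-matrix identity, the averaging inequality, and the numerical identities \eqref{F-2.10}, \eqref{F-2.11}, \eqref{F-2.4}---survives for real $f$ because only the first (not a fractional) power of $J_{A,B}^f$ occurs. The genuinely new content is the converse direction (c1) $\Rightarrow$ (c4): that joint convexity of the quasi-entropy forces operator convexity of $f$, sharpening the sufficiency result of \cite{Pe1} into a full characterization.
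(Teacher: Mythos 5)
Your proposal is correct and takes essentially the same route as the paper: it identifies (c1)--(c3) with (iv), (IV), (IV$'$) of Theorem \ref{T-3.1} at $\theta=-1$ (where only $J_{A,B}^f$ itself appears, so positivity of $f$ is never needed), quotes \cite{Pe1} for (c4) $\Rightarrow$ (c1), extracts (c1) $\Rightarrow$ (c4) and (c1) $\Rightarrow$ (c5) from the slices \eqref{F-2.10} and \eqref{F-2.11} exactly as the paper does via its proof of Theorem \ref{T-2.1}\,(e), and obtains (c4) $\Leftrightarrow$ (c5) by closing the cycle. One small caveat: your parenthetical alternative calling (c4) $\Leftrightarrow$ (c5) a ``standard fact'' is at odds with the paper, which remarks that this equivalence for a merely real $f$ seems new, so you should rely on your primary cycle argument rather than that appeal.
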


\begin{proof}
Conditions (c1), (c2) and (c3) are nothing but (iv), (IV) and (IV$'$), respectively,
with $\theta=-1$, whose equivalence is in Theorem \ref{T-3.1}. (c4) $\Rightarrow$ (c1)
and (c4) $\Rightarrow$ (c2) were given in \cite{Pe1} (see also \cite{Eff}). In the proof
of Theorem \ref{T-2.1}\,(e) we saw that (iv) implies that $f(x)^{-\theta}$ and
$(f(x^{-1})x)^{-\theta}$ are operator convex on $(0,\infty)$. In particular, when
$\theta=-1$, this shows that (c1) implies (c4) and (c5). (Note that under $\theta=-1$ we
did not use the positivity assumption for $f$ in this part of the proof and also in the
proof of Theorem \ref{T-3.1}.) Hence we have (c4) $\Rightarrow$ (c5), which gives also
(c5) $\Rightarrow$ (c4).
\end{proof}

The equivalence of (c4) and (c5) for a real function $f$ seems new.

\begin{remark}\label{R-3.4}\rm
Conditions (c1)--(c3) with restriction of $A,B$ to density matrices are also equivalent
to the conditions in Theorem \ref{T-3.3}. Indeed, write (c1$'$)--(c3$'$) for (c1)--(c3)
with this restriction. Then (c1) $\Leftrightarrow$ (c1$'$) is
(iv) $\Leftrightarrow$ (iv$'$) of Theorem \ref{T-2.1}\,(a) with $\theta=-1$. It is
immediate to check that Theorem \ref{T-3.1} holds with restriction of $A,B$ to density
matrices. This means that (c1$'$)--(c3$'$) are equivalent.
\end{remark}

\begin{thm}\label{T-3.5}
For any real function $f$ on $(0,\infty)$ that is not identically zero,
$S_X^f(\rho\|\sigma)$ is not jointly convex in
$(\rho,\sigma,X)\in\cD_n\times\cD_n\times\bM_n$ for some $n\in\bN$.
\end{thm}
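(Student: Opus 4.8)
The plan is to stay in dimension $n=2$ and reduce the full joint convexity to convexity of one explicit three-variable scalar function, then exploit a homogeneity obstruction. First I would take $\rho=\mathrm{diag}(s,1-s)$ and $\sigma=\mathrm{diag}(1-t,t)$ with $s,t\in(0,1)$, let $E_{12}$ denote the matrix unit with a single $1$ in the $(1,2)$ entry, and set $X=zE_{12}$ with $z\in\bR$. A direct computation from \eqref{F-1.2} (with $\theta=-1$) collapses the double sum to the single term $(i,j)=(1,2)$ and gives
\[
S_f^{zE_{12}}\bigl(\mathrm{diag}(s,1-s)\,\big\|\,\mathrm{diag}(1-t,t)\bigr)=z^2\,t\,f(s/t)=:\varphi(s,t,z).
\]
Since $(s,t,z)\mapsto(\rho,\sigma,X)$ is an affine embedding of the open convex set $(0,1)^2\times\bR$ into $\cD_2\times\cD_2\times\bM_2$, joint convexity of $S_f^X(\rho\|\sigma)$ would force $\varphi$ to be convex on $(0,1)^2\times\bR$. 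The whole proof then reduces to showing that $\varphi$ is \emph{not} convex whenever $f\not\equiv0$.

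Next I would pin down the sign of $f$. Fixing $(s,t)$ and varying $z$, convexity of the parabola $z\mapsto z^2\,tf(s/t)$ requires its leading coefficient $tf(s/t)$ to be nonnegative; as $s/t$ ranges over all of $(0,\infty)$ this yields $f\ge0$ on $(0,\infty)$ (equivalently, if $f$ is negative anywhere the assertion is already proved). Because $f\not\equiv0$ there is then an $x_0>0$ with $f_0:=f(x_0)>0$.

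The crux is the homogeneity of the coefficient $\phi(s,t):=tf(s/t)$, which is positive and homogeneous of degree one along the ray $s/t\equiv x_0$. The key point is that $z^2\phi(s,t)$ cannot be jointly convex for such $\phi$: formally, joint convexity of $m(v)z^2$ forces $1/m$ to be concave, which is impossible for a positive degree-one homogeneous $m$, since its reciprocal is positive, homogeneous of degree $-1$, hence strictly convex (not concave) along each ray. I would make this concrete by a single failure of midpoint convexity: choose $s_0,t_0\in(0,1)$ with $s_0/t_0=x_0$ and compare the points $(s_0,t_0,0)$ and $(s_0/4,\,t_0/4,\,1)$, whose midpoint is $(5s_0/8,\,5t_0/8,\,1/2)$. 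Then $\varphi(s_0,t_0,0)=0$, $\varphi(s_0/4,t_0/4,1)=\tfrac{t_0}{4}f_0$, while $\varphi(5s_0/8,5t_0/8,1/2)=\tfrac{5t_0}{32}f_0$; since $\tfrac{5t_0}{32}f_0>\tfrac12\cdot\tfrac{t_0}{4}f_0=\tfrac{4t_0}{32}f_0$, convexity fails. This contradicts the reduction above and completes the proof with $n=2$.

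The main obstacle to watch for is the density-matrix constraint $\Tr\rho=1$: unlike the situation on $\bP_n$, one cannot freely rescale $\rho,\sigma$, so the naive global scaling argument is unavailable. This is exactly why I route everything through the explicit $2\times2$ slice, where $tf(s/t)$ retains its degree-one homogeneity in $(s,t)$ \emph{within} $\cD_2$; the ratio $s/t$ can be held fixed while $(s,t)$ is scaled, which is all the homogeneity the argument needs. I would also note that this finite counterexample sidesteps any regularity or positivity assumption on $f$, while at the infinitesimal level it matches the Hessian computation in the proof of Theorem \ref{T-2.1}(c): for $\theta=-1$ that determinant equals $-4f''(u)f(u)^2/y$ against the diagonal entry $f''(u)/y$, forcing $f$ to be affine — but the discrete argument needs neither differentiability nor the sign of $f$ known in advance.
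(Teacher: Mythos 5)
Your proof is correct, and it takes a genuinely different route from the paper's. You restrict to $n=2$, diagonal $\rho=\mathrm{diag}(s,1-s)$, $\sigma=\mathrm{diag}(1-t,t)$ and $X=zE_{12}$, where the quasi-entropy collapses (by \eqref{F-1.2} with $\theta=-1$, valid for real $f$) to the scalar function $\varphi(s,t,z)=z^2\,t f(s/t)$; convexity in $z$ alone forces $f\ge0$, and then the degree-one homogeneity of $(s,t)\mapsto tf(s/t)$ along the ray $s/t=x_0$ (where $f(x_0)>0$) produces your explicit midpoint violation $\tfrac{5}{32}t_0f_0>\tfrac{4}{32}t_0f_0$ — in essence the elementary fact that $(z,w)\mapsto z^2w$ is never convex on $\bR\times(0,\infty)$. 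The paper argues quite differently: it first lifts the hypothesis from $\cD_n$ to $\bP_n$ via Theorem \ref{T-2.1}\,(a), then, as in the proof of (i) $\Rightarrow$ (vii) of Theorem \ref{T-2.1}\,(c), uses the column-matrix embedding \eqref{F-2.10}--\eqref{F-2.11} and the dimension-limit Lemma \ref{L-2.3} to deduce that both $(A,\xi)\mapsto\<\xi,f(A)\xi\>$ and $(A,\xi)\mapsto\<\xi,f(A^{-1})A\xi\>$ are jointly convex on $B(\cH)^{++}\times\cH$, whence by a result of Ando--Hiai both $f$ and $f(x^{-1})x$ are non-negative and operator monotone decreasing, which is impossible unless $f\equiv0$. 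Your argument buys elementarity and explicitness: it is entirely self-contained, needs no operator-monotonicity theory, no infinite-dimensional limit, and no reduction from $\cD_n$ to $\bP_n$, and it pins the failure down to $n=2$ with concrete points (strictly more information than ``for some $n$''). The paper's argument buys economy within its own framework — it is three lines given the machinery already built for Theorem \ref{T-2.1} — and it exposes the structural obstruction (two incompatible operator monotone decreasing conditions) that also drives the other characterizations in the paper. Your closing Hessian remark is consistent with \eqref{F-2.12} at $\theta=-1$ but, as you note, is dispensable; the discrete midpoint computation is what carries the proof.
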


\begin{proof}
Suppose that $S_X^f(\rho\|\sigma)$ is jointly convex in
$(\rho,\sigma,X)\in\cD_n\times\cD_n\times\bM_n$ for every $n$. By Theorem \ref{T-2.1}\,(a)
(here the positivity assumption for $f$ is irrelevant), so is $S_X^f(A\|B)$ in
$(A,B,X)\in\bP_n\times\bP_n\times\bM_n$ for every $n$. Then as in the proof of
(i) $\Rightarrow$ (vii) of Theorem \ref{T-2.1}\,(c), it follows that $\<\xi,f(A)\xi\>$ and
$\<\xi,f(A^{-1})A\xi\>$ are jointly convex in $(A,\xi)\in B(\cH)^{++}\times\cH$. Hence by
\cite[Remark 3.5]{AH} both $f$ and $f(x^{-1})x$ are non-negative and operator monotone
decreasing. But this is impossible unless $f$ is identically zero.
\end{proof}

\begin{remark}\rm
When $f$ is a real function on $[0,\infty)$, the definition of the quasi-entropy
$S_f^X(A\|B)$ for general $A,B\in\bM_n^+$ and $X\in\bM_n$ (\cite{Ko,Pe2}) is
$$
S_f^X(A\|B):=\<XB^{1/2},f(L_AR_{B^{-1}})XB^{1/2}\>_\HS,
$$
where  $B^{-1}$ is defined in the sense of generalized inverse. (Recall that the relative
modular operator for $\Tr(A\,\cdot)$ and $\Tr(B\,\cdot)$ coincides with $L_AR_{B^{-1}}$
in this sense.) Here, assume that $\omega(f):=\lim_{x\to\infty}f(x)/x$ exists in
$[-\infty,\infty]$. Then we notice (as in \cite[Proposition 2.2]{HMPB} in the case $X=I$)
that
\begin{equation}\label{F-3.3}
\lim_{\eps\searrow0}S_f^X(A\|B+\eps I)=S_f^X(A\|B)+\omega(f)\Tr X^*AX(I-B^0),
\end{equation}
where $B^0$ means the support projection of $B$. We write $\tilde S_f^X(A\|B)$ for the
above identical expressions. Obviously, $\tilde S_f^X(A\|B)=S_f^X(A\|B)$ if $B\in\bP_n$.
Now we show that if $f$ is an operator convex function on $[0,\infty)$, then
$\tilde S_f^X(A\|B)$ is jointly convex in $(A,B)\in\bM_n^+\times\bM_n^+$ for every
$X\in\bM_n$. Indeed, it is clear that $\omega(f)$ exists in $(-\infty,\infty]$ for convex
$f$. From definition in the left-hand side of \eqref{F-3.3}, it suffices to prove the
joint convexity on $\bM_n^+\times\bP_n$. But this is immediate from the joint convexity
on $\bP_n\times\bP_n$ (Theorem \ref{T-3.3}) and  the continuity of
$A\in\bM_n^+\mapsto S_f^X(A\|B)$ with fixed $B\in\bP_n$. Furthermore, it is easy to see
that
$$
(A,B)\in\bM_n^+\times\bM_n^+\mapsto\Tr X^*AX(I-B^0)
$$
is jointly convex. Thus, when $f$ is a non-negative opreator monotone function on
$[0,\infty)$ (hence $-f$ is operator convex), we notice that
$$
S_f^X(A\|B)=\tilde S_f^X(A\|B)-\omega(f)\Tr X^*AX(I-Q)
$$
is jointly concave in $(A,B)\in\bM_n^+\times\bM_n^*$. This is the joint concavity results
in \cite{Ko,Pe1} though restricted to matrices. The above argument also clarifies why
the assumption of $B\in\bP_n$ is essential for the joint convexity result in
\cite{Pe2,HMPB} when $f$ is operator convex. In this way, joint convexity of
$\tilde S_f^X(A\|B)$ covers all the known joint concavity/convexity results for
$S_f^X(A\|B)$ in \cite{Ko,Pe1,Pe2} (also \cite{HMPB} where $\tilde S_f^X(A\|B)$ with $X=I$
was denoted by $S_f(A\|B)$).
\end{remark}

\section{Convexity of skew information and quantum $\chi^2$-divergence}
\setcounter{equation}{0}

Given a general function $f>0$ on $(0,\infty)$ we define the (unbounded version of)
{\it $f$-skew information}
$$
\cI_A^f(X):=\<i[A,X],(J_A^f)^{-1}(i[A,X])\>_\HS=I_f^1(A,A,i[A,X])
$$
and the {\it quantum $f$-$\chi^2$-divergence}
$$
\chi_f^2(A,B):=\<A-B,(J_B^f)^{-1}(A-B)\>_\HS=I_f^1(B,B,A-B)
$$
for each $A,B\in\bP_n$ and $X\in\bM_n$. When $A=\rho$, $B=\sigma$ with
$\rho,\sigma\in\cD_n$ and $f$ is an operator monotone function, $\cI_\rho(X)$ is the
unbounded version of the metric adjusted skew information \eqref{F-1.7} and
$\chi_f^2(\rho,\sigma)$ is the quantum $\chi^2$-divergence \eqref{F-1.8}.

We define the {\it harmonic symmetrization} $f^\sym$ of $f$ to be $f_{-1,\sym}$ given in
\eqref{F-2.1} with $\theta=1$, i.e., $(-1)$-power symmetrization of $f$. In this section
we show the next theorem, which extend convexity results in \cite{Ha2,CH,TKRWV,Ha3} (see
Section 1.4) into a combined characterization theorem.

\begin{thm}\label{T-4.1}
Let $f>0$ be a function on $(0,\infty)$. Then the following conditions are equivalent:
\begin{itemize}
\item[\rm(d1)] $A\in\bP_n\mapsto\cI_A^f(K)$ is convex for any fixed $K\in\bH_n$ and for
every $n\in\bN$;
\item[\rm(d2)] $\rho\in\cD_n\mapsto\cI_\rho^f(K)$ is convex for any fixed $K\in\bH_n$ and for
every $n\in\bN$;
\item[\rm(d3)] $(A,B)\in\bP_n\times\bP_n\mapsto\chi_f^2(A,B)$ is jointly convex
for every $n\in\bN$;
\item[\rm(d4)] $(\rho,\sigma)\in\cD_n\times\cD_n\mapsto\chi_f^2(\rho,\sigma)$ is
jointly convex for every $n\in\bN$;
\item[\rm(d5)] $(x-1)^2/f^\sym(x)$ is operator convex on $(0,\infty)$;
\item[\rm(d6)] $f^\sym$ is operator monotone on $(0,\infty)$.
\end{itemize}
\end{thm}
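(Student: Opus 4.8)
My plan is to separate the six conditions into a skew-information group $\{$(d1),(d2),(d5)$\}$ and a $\chi^2$ group $\{$(d3),(d4),(d6)$\}$, establish the equivalences inside each group from Theorems \ref{T-2.1} and \ref{T-3.3}, and then join the groups through the purely scalar equivalence (d5)$\Leftrightarrow$(d6). The first step is to reduce to symmetric $f$. Both $i[A,K]$ and $A-B$ are Hermitian and the first two arguments of $I_f^1$ agree in $\cI_A^f(K)=I_f^1(A,A,i[A,K])$ and in $\chi_f^2(A,B)=I_f^1(B,B,A-B)$, so by \eqref{F-2.5} one has $\cI_A^f(K)=\cI_A^{f^\sym}(K)$ and $\chi_f^2(A,B)=\chi_{f^\sym}^2(A,B)$. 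Thus (d1)--(d4) are unchanged when $f$ is replaced by $f^\sym$, and I may assume $f=f^\sym$ throughout, so that (d5) reads ``$(x-1)^2/f$ is operator convex'' and (d6) reads ``$f$ is operator monotone''. I also record a reformulation of the skew information: with $h(x):=(x-1)^2/f(x)$ and $A=\sum_i\alpha_iP_i$ diagonalized, the entrywise identity $(\alpha_i-\alpha_j)^2/(f(\alpha_i\alpha_j^{-1})\alpha_j)=h(\alpha_i\alpha_j^{-1})\alpha_j$ gives
$$
\cI_A^f(K)=\<K,J_A^hK\>_\HS=I_h^{-1}(A,A,K)=S_h^K(A\|A),
$$
the quasi-entropy of the real function $h\ge0$ evaluated on the diagonal $A=B$; moreover $h$ is symmetric ($\tilde h=h$) because $f$ is.

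For the skew-information group I will prove (d5)$\Leftrightarrow$(d1) and (d1)$\Leftrightarrow$(d2). The implication (d5)$\Rightarrow$(d1) is immediate: by Theorem \ref{T-3.3}, operator convexity of $h$ makes $S_h^K(A\|B)$ jointly convex in $(A,B)$, and restricting to $A=B$ gives convexity of $A\mapsto\cI_A^f(K)$. For (d1)$\Rightarrow$(d5) I use the block trick behind \eqref{F-2.2}: for $A,B\in\bP_n$ and any $K\in\bM_n$ set $\tilde\rho:=\mathrm{diag}(A,B)\in\bP_{2n}$ and the Hermitian matrix $\tilde K:=\bigl[\begin{smallmatrix}0&K\\K^*&0\end{smallmatrix}\bigr]$; then \eqref{F-2.2}, \eqref{F-2.4} and $\tilde h=h$ yield $\cI_{\tilde\rho}^f(\tilde K)=2I_h^{-1}(A,B,K)=2S_h^K(A\|B)$. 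Since $\tilde K$ is Hermitian for every $K\in\bM_n$, (d1) in dimension $2n$ makes $\tilde\rho\mapsto\cI_{\tilde\rho}^f(\tilde K)$ convex; restricting to the linear subspace of block-diagonal $\tilde\rho$ shows $S_h^K(A\|B)$ is jointly convex in $(A,B)$ for every $K\in\bM_n$, so Theorem \ref{T-3.3} forces $h$ operator convex, i.e. (d5). Finally, $A\mapsto\cI_A^f(K)$ is positively homogeneous of degree one (as $J_{cA}^h=cJ_A^h$), so convexity on the cone $\bP_n$ is equivalent to subadditivity and hence to convexity on the slice $\cD_n$; this gives (d1)$\Leftrightarrow$(d2).

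For the $\chi^2$ group I will prove the cycle (d6)$\Rightarrow$(d3)$\Rightarrow$(d4)$\Rightarrow$(d6) together with (d3)$\Leftrightarrow$(d4). If $f$ is operator monotone, then condition (vii) holds with $\theta=1$, so Theorem \ref{T-2.1}(c) gives joint convexity of $I_f^1(A,B,X)$ in $(A,B,X)$; composing with the linear map $(A,B)\mapsto(B,B,A-B)$ yields joint convexity of $\chi_f^2$, which is (d3), and (d3)$\Rightarrow$(d4) is trivial. For (d4)$\Rightarrow$(d6) I change variables by $(\rho,\sigma)\mapsto(\sigma,H):=(\sigma,\rho-\sigma)$, turning (d4) into joint convexity of $(\sigma,H)\mapsto I_f^1(\sigma,\sigma,H)$ on the convex set $\{\sigma\in\cD_n,\ \Tr H=0,\ \sigma+H\ge0\}$; since this map is homogeneous of degree two in $H$, replacing each $H$ by $tH$ with $t$ small enough to stay in the set and dividing the resulting midpoint inequality by $t^2$ extends the convexity to all of $\cD_n\times\bH_n^0$, which is exactly (i$''$) for $f$ with $\theta=1$. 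By Theorem \ref{T-2.1}(b) (here $f^\sym=f$) this is (i) for $f$, and by (c) it forces $f$ operator monotone, i.e. (d6). The degree-one homogeneity of $\chi_f^2$ gives (d3)$\Leftrightarrow$(d4) as in the previous paragraph.

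It remains to bridge the two groups via (d5)$\Leftrightarrow$(d6), a statement about scalar functions that I expect to be the main obstacle. For (d6)$\Rightarrow$(d5): if $f>0$ is operator monotone then $1/f$ is operator monotone decreasing and admits $1/f(x)=\alpha+\int_{[0,\infty)}\frac{1+\lambda}{x+\lambda}\,d\mu(\lambda)$ with $\alpha\ge0$ and $\mu$ finite positive; multiplying by $(x-1)^2$ writes $(x-1)^2/f(x)$ as a nonnegative combination of $\alpha(x-1)^2$ and the kernels $(x-1)^2(1+\lambda)/(x+\lambda)$, each operator convex, so $(x-1)^2/f$ is operator convex. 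For the converse, $g:=(x-1)^2/f$ is nonnegative and operator convex with $g(1)=0$, hence $g'(1)=0$, so the integral representation of operator convex functions gives $g(x)=\int_{[0,\infty]}\frac{(x-1)^2}{x+s}\,d\nu(s)$ (the point $s=\infty$ producing the term $(x-1)^2$); dividing by $(x-1)^2$ shows $1/f(x)=\int_{[0,\infty]}\frac{1}{x+s}\,d\nu(s)$ is operator monotone decreasing, whence $f$ is operator monotone. The delicate part here is identifying the exact representation of $g$ and its boundary behaviour at $s=0,\infty$; once (d5)$\Leftrightarrow$(d6) is secured, the chain (d2)$\Leftrightarrow$(d1)$\Leftrightarrow$(d5)$\Leftrightarrow$(d6)$\Leftrightarrow$(d4)$\Leftrightarrow$(d3) closes and the theorem follows.
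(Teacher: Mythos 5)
Your proposal is correct, and its matrix-level architecture coincides with the paper's: you handle the skew-information group by the same block trick ($\tilde\rho=\mathrm{diag}(A,B)$ with off-diagonal Hermitian $\tilde K$), reducing (d1) to joint convexity of the quasi-entropy $S_h^K(A\|B)$ and hence, via Theorem \ref{T-3.3}, to operator convexity of $h$; and you handle the $\chi^2$ group by the same cycle (d6)$\Rightarrow$(d3)$\Rightarrow$(d4)$\Rightarrow$(d6), with (d4)$\Rightarrow$(d6) proved by exactly the paper's perturbation/homogeneity argument leading to condition (i$''$) and then Theorem \ref{T-2.1}\,(b),(c). Your upfront reduction to $f=f^\sym$ is a tidy repackaging of what the paper does by carrying $h_\sym$ and $f^\sym$ through the formulas, and your degree-one homogeneity argument for (d1)$\Leftrightarrow$(d2) is a harmless variant of the scaling-plus-embedding argument in Theorem \ref{T-2.1}\,(a). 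The genuine divergence is the scalar bridge (d5)$\Leftrightarrow$(d6): the paper isolates this as Lemma \ref{L-4.2} and proves it via a Pick-function argument, the representation (from \cite{FHR}) of operator monotone functions vanishing at $1$, and Uchiyama's lemma \cite{Uc}, whereas you argue in both directions directly from integral representations. Your (d6)$\Rightarrow$(d5) is solid, since it uses the same representation of operator monotone decreasing functions that the paper itself invokes from \cite{Ha1,AH}. Your (d5)$\Rightarrow$(d6), however, rests on the Kraus/L\"owner-type representation $g(x)=\int_{[0,\infty]}\frac{(1+s)(x-1)^2}{x+s}\,d\nu(s)$ for operator convex $g$ on $(0,\infty)$ with $g(1)=g'(1)=0$; note that the weight $1+s$ is essential, since your kernel $(x-1)^2/(x+s)$ vanishes as $s\to\infty$ and cannot by itself produce the $(x-1)^2$ term. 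That representation is true and available (e.g.\ it follows from \cite{FHR}), so the step is fillable, but as written it is the one place where you lean on an unproved claim --- the paper's detour through Uchiyama's lemma exists precisely to sidestep it. One more caveat: your closing remark that degree-one homogeneity alone yields (d3)$\Leftrightarrow$(d4) does not stand on its own, because joint scaling $(A,B)\mapsto(cA,cB)$ can normalize both matrices to $\cD_n$ only when $\Tr A=\Tr B$, so (d4) plus homogeneity gives convexity only on that subcone of $\bP_n\times\bP_n$; fortunately the remark is redundant, since your cycle through (d6) already closes the equivalence.
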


We first give the following lemma. The equivalence between (1) and (4) will be used in the
proof of the theorem. Other conditions (2) and (3) are stated for the convenience of the
proof and also for the completeness of statements.

\begin{lemma}\label{L-4.2}
Let $f>0$ be a function on $(0,\infty)$. Then the following conditions are equivalent:
\begin{itemize}
\item[\rm(1)] $f$ is operator monotone on $(0,\infty)$;
\item[\rm(2)] $(x-1)/f(x)$ is operator monotone on $(0,\infty)$;
\item[\rm(3)] $(x-1)f(x)$ is operator convex on $(0,\infty)$;
\item[\rm(4)] $(x-1)^2/f(x)$ is operator convex on $(0,\infty)$.
\end{itemize}
\end{lemma}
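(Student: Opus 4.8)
The plan is to route all four equivalences through a single divided-difference principle at the interior base point $x=1$. Call it the key lemma: for a continuous function $\phi$ on $(0,\infty)$, the function $(x-1)\phi(x)$ is operator convex on $(0,\infty)$ if and only if $\phi$ is operator monotone on $(0,\infty)$. This is the standard L\"owner-type statement that an operator convex $g$ is characterized by operator monotonicity of its divided difference $(g(x)-g(1))/(x-1)$, read for those $g$ that vanish at $1$; crucially it requires no positivity of $\phi$, which matters since some $\phi$ fed in below change sign. I would obtain it either by citing the divided-difference characterization (\cite{Bh,Hi}) at the interior point $1$, or, self-containedly, from the integral representation of operator convex functions centered at $1$ already used in the proof of Lemma \ref{L-2.5}: writing $g(x)=\alpha+\beta(x-1)+\int\frac{(x-1)^2}{1+\lambda(1-x)}\,d\mu(\lambda)$ with the measure supported so that each kernel is defined on all of $(0,\infty)$, division by $x-1$ exhibits $(g(x)-g(1))/(x-1)$ as $\beta$ plus a superposition of the elementary operator monotone functions $\frac{x-1}{1+\lambda(1-x)}$, and the argument reverses.

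Granting the key lemma, two equivalences are immediate. Taking $\phi=f$ gives (1) $\Leftrightarrow$ (3), since $(x-1)f(x)$ vanishes at $1$ and has divided difference $f$. Taking $\phi(x)=(x-1)/f(x)$ gives (2) $\Leftrightarrow$ (4), since $(x-1)\cdot(x-1)/f(x)=(x-1)^2/f(x)$, whose divided difference is $(x-1)/f(x)$.

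It then suffices to bridge the two pairs by proving (1) $\Leftrightarrow$ (2). For (1) $\Rightarrow$ (2) I use positivity of $f$: when $f>0$ is operator monotone, $x/f(x)$ is again operator monotone (a standard property of positive operator monotone functions, \cite{Hi}), and $-1/f(x)$ is operator monotone because $1/f$ is operator antitone; hence the sum $x/f(x)-1/f(x)=(x-1)/f(x)$ is operator monotone, which is (2). For (2) $\Rightarrow$ (1) I use that every operator monotone function on $(0,\infty)$ is operator concave: thus $g(x):=(x-1)/f(x)$ operator monotone forces $-g(x)=(x-1)\cdot(-1/f(x))$ to be operator convex, and the key lemma with $\phi=-1/f$ then yields that $-1/f$ is operator monotone; since $f>0$, this is equivalent to $f$ being operator monotone, i.e.\ (1). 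Combining, (3) $\Leftrightarrow$ (1) $\Leftrightarrow$ (2) $\Leftrightarrow$ (4), and in particular (1) $\Leftrightarrow$ (4), which is what Theorem \ref{T-4.1} uses.

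The main obstacle is the key lemma: one must make sure the divided-difference characterization really holds at an \emph{interior} base point and for \emph{sign-changing} functions, rather than only at the endpoint $0$ for nonnegative functions as in the most familiar references. A secondary point to handle carefully is the repeated equivalence ``$1/f$ operator antitone $\Leftrightarrow$ $f$ operator monotone'', which is precisely where the standing hypothesis $f>0$ enters and cannot be dispensed with. Everything else is routine substitution.
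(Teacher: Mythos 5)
Your proposal is correct, and the heart of the lemma --- the bridge (1) $\Leftrightarrow$ (2) --- is proved by a genuinely different route from the paper's. Both treatments obtain (1) $\Leftrightarrow$ (3) and (2) $\Leftrightarrow$ (4) from the same divided-difference principle: your ``key lemma'' is precisely \cite[Lemma 2.1]{Uc} (also \cite[Corollary 2.7.8]{Hi}), which the paper simply cites, and the paper likewise applies it to the sign-changing function $(x-1)/f(x)$, so your insistence that no positivity of $\phi$ is needed is well placed. The difference is in the bridge. The paper proves (1) $\Rightarrow$ (2) complex-analytically: it continues $f$ to a Pick function, uses $\arg f(z)\le\arg z$, and checks by an argument computation that $(z-1)/f(z)$ maps $\bC^+$ into $\bC^+$; it proves (2) $\Rightarrow$ (1) via the integral representation of \cite[Theorem 1.9]{FHR} for operator monotone functions vanishing at $1$, dividing by $x-1$ to exhibit $1/f$ as operator monotone decreasing. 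You replace the first by the elementary decomposition $(x-1)/f(x)=x/f(x)-1/f(x)$ (both summands operator monotone, by standard facts on positive operator monotone functions), and the second by recycling the key lemma with $\phi=-1/f$, using that operator monotone functions on $(0,\infty)$ are operator concave. Your route is more economical --- no Pick functions, no appeal to the higher-order L\"owner theory of \cite{FHR}, everything hanging off one lemma --- whereas the paper's route yields in passing the explicit integral representation of $1/f$. Two details to pin down in a write-up: the implication ``operator monotone $\Rightarrow$ operator concave'' must be justified for the \emph{sign-changing} function $(x-1)/f(x)$ (it holds on $(0,\infty)$ via the standard integral representation, but the statement the paper quotes from \cite[V.2.5]{Bh} is for non-negative functions, and the implication fails on bounded intervals); and if you prove the key lemma rather than cite it, use the representation of operator convex functions on $(0,\infty)$ with kernels $(x-1)^2/(x+\lambda)$, $\lambda\in[0,\infty)$, since the $(-1,1)$-centered representation appearing in the proof of Lemma \ref{L-2.5} is a priori valid only on $(0,2)$ and the support restriction you invoke on the measure needs its own argument.
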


\begin{proof}
(1) $\Rightarrow$ (2).\enspace
L\"owner's theorem tells that $f$ is analytically continued to a Pick function $f(z)$
defined on $\bC^+$, so $f(z)$ maps $\bC^+$ into $\bC^+\cup\,\bR$ and moreover
$\arg f(z)\le\arg z$ (here argument is taken in $[0,\pi)$) for every $z\in\bC^+$. (The
last fact on $\arg f(z)$ might not be familiar but it is easily verified by using the
integral representation of $f$.) Then $g(z):=(z-1)/f(z)$ is well defined as an analytic
function on $\bC^+$. (If $f$ is a constant  $\alpha>0$, then $f(z)$ is a constant $\alpha$.
If $f$ is not constant, then $f(\bC^+)\subset\bC^+$ so that $f(z)$ is not zero for any
$z\in\bC^+$.) When $z=re^{i\eta}$ with $r>0$ and $\eta\in(0,\pi)$, we notice that
$z-1=r_1e^{i\eta_1}$ with $r_1>0$ and $\eta<\eta_1<\pi$, and that $f(z)=r_2e^{i\eta_2}$
with $r_2>0$ and $0\le\eta_2\le\eta$. Therefore,
$$
\Im g(z)={\Im\{(z-1)\overline{f(z)}\}\over|f(z)|^2}
$$
and
$$
\Im\{(z-1)\overline{f(z)}\}=\Im\{r_1e^{i\eta_1}\cdot r_2e^{-i\eta_2}\}
=r_1r_2\,\Im e^{i(\eta_1-\eta_2)}>0,
$$
since $0<\eta_1-\eta_2<\pi$. L\"owner's theorem implies that $g$ is operator monotone on
$(0,\infty)$.

\medskip
(2) $\Rightarrow$ (1).\enspace
Assume that $g(x):=(x-1)/f(x)$ is operator monotone on $(0,\infty)$. Then $g(1)=0$, and
by \cite[Theorem 1.9]{FHR} there exist a $\gamma>0$ and a positive measure $\mu$ on
$[0,\infty)$ such that
$$
\int_{[0,\infty)}{1\over(1+\lambda)^2}\,d\mu(\lambda)<+\infty
$$
and
$$
g(x)=\gamma(x-1)+\int_{[0,\infty)}{x-1\over(x+\lambda)(1+\lambda)}\,d\mu(\lambda),
\qquad x\in(0,\infty).
$$
Therefore, we have
$$
{1\over f(x)}=\gamma+\int_{[0,\infty)}{1\over(x+\lambda)(1+\lambda)}\,d\mu(\lambda)
=\gamma+\int_{[0,\infty)}{1+\lambda\over x+\lambda}\,d\nu(\lambda),
$$
where $\nu$ is a finite positive measure on $[0,\infty)$ given by
$d\nu(\lambda):=d\mu(\lambda)/(1+\lambda)^2$. The above integral expression shows
(see \cite{Ha1}, \cite[Theorem 3.1]{AH}) that $1/f(x)$ is operator monotone decreasing on
$(0,\infty)$, so $f$ is operator monotone on $(0,\infty)$.

(1) $\Leftrightarrow$ (3) and (2) $\Leftrightarrow$ (4) are seen from \cite[Lemma 2.1]{Uc}
(also \cite[Corollary 2.7.8]{Hi}).
\end{proof}

We note that (1) $\Rightarrow$ (4) was shown in \cite{CH} in a different (and more
tractable) method, however the above proof has a merit to show the equivalence of (1) and
(4). As remarked in \cite{CH}, when $f(x)=(x-1)^2$ on $(0,\infty)$, we have
$(x-1)^2/f(x)\equiv1$ but $f$ is not operator monotone. Hence the assumption $f>0$ cannot
be relaxed to $f\ge0$ for (4) $\Rightarrow$ (1).

\bigskip\noindent
{\it Proof of Theorem \ref{T-4.1}.}\enspace
The proof of (d1) $\Leftrightarrow$ (d2) is similar to the last part of the proof of
Theorem \ref{T-2.1}\,(a), and (d5) $\Leftrightarrow$ (d6) follows from Lemma \ref{L-4.2}.
To prove that (d1) $\Leftrightarrow$ (d5), define $\tilde f(x):=f(x^{-1})x$ and
\begin{equation}\label{F-4.1}
h(x):={(x-1)^2\over f(x)},\quad\tilde h(x):=h(x^{-1})x,
\quad h_\sym(x):={h(x)+\tilde h(x)\over2},\qquad x>0,
\end{equation}
so that
$$
\tilde h(x)={(x-1)^2\over\tilde f(x)},
\quad h_\sym(x)={(x-1)^2\over f^\sym(x)}.
$$
As mentioned in \cite{Ha2,CH}, for every $A\in\bP_n$ and $X\in\bM_n$ we notice that
\begin{equation}\label{F-4.2}
\cI_A^f(X)=\<X,J_A^h(X)\>_\HS,\quad
\cI_A^{\tilde f}(X)=\<X,J_A^{\tilde h}(X)\>_\HS.
\end{equation}
Furthermore, similarly to \eqref{F-2.4} we have
\begin{equation}\label{F-4.3}
\cI_A^f(X)=\cI_A^{\tilde f}(X^*),\qquad X\in\bM_n,
\end{equation}
so that for every $K\in\bH_n$,
\begin{equation}\label{F-4.4}
\cI_A^f(K)=\cI_A^{\tilde f}(K)=\bigl\<K,J_A^{h_\sym}K\bigr\>_\HS=S_{h_\sym}^K(A\|A).
\end{equation}
For every $A,B\in\bP_n$ and $X\in\bM_n$ set
$$
\tilde A:=\bmatrix A&0\\0&B\endbmatrix\in\bP_{2n},\quad
\tilde K:=\bmatrix0&X\\X^*&0\endbmatrix\in\bH_{2n}.
$$
Thanks to \eqref{F-4.4}, \eqref{F-2.2} and \eqref{F-2.4} we have
\begin{align}
\cI_{\tilde A}^f(\tilde K)
&=\bigl\<\tilde K,J_{\tilde A}^{h_\sym}\tilde K\bigr\>_\HS
=\<X,J_{A,B}^{h_\sym}X\>_\HS+\<X^*,J_{B,A}^{h_\sym}X^*\>_\HS \nonumber\\
&=2\<X,J_{A,B}^{h_\sym}X\>_\HS=2S_{h_\sym}^X(A\|B). \label{F-4.5}
\end{align}
From \eqref{F-4.4} and \eqref{F-4.5} one can see that (d1) holds if and only if
condition (c1) of Theorem \ref{T-3.3} holds for $h_\sym$ in place of $f$. The latter
condition is equivalent to (d5) by Theorem \ref{T-3.3}.

Now we turn to conditions (d3) and (d4). (d3) $\Rightarrow$ (d4) is trivial. Since
$\chi_f^2(A,B)=\chi_{f^\sym}^2(A,B)$ for all $A,B\in\bP_n$ due to \eqref{F-2.5},
(d6) $\Rightarrow$ (d3) follows from (vii) $\Rightarrow$ (i) of Theorem 2.1\,(c).
It remains to prove that (d4) $\Rightarrow$ (d6). Assume (d4) so that
$(\rho,\sigma)\in\cD_n\times\cD_n\mapsto\chi_{f^\sym}^2(\rho,\sigma)$ is jointly convex.
Let $\sigma_1,\sigma_2\in\cD_n$, $H_1,H_2\in\bH_n^0$ and $0<\lambda<1$. Choose an $\eps>0$
such that $\sigma_i+\eps H_i$, $i=1,2$, are positive definite. Set
$\rho_i:=\sigma_i+\eps H_i\in\cD_n$, $i=1,2$. Then we have
\begin{align*}
&I_{f^\sym}^1(\lambda\sigma_1+(1-\lambda)\sigma_2,\lambda\sigma_1+(1-\lambda)\sigma_2,
\lambda H_1+(1-\lambda)H_2) \\
&\qquad=\eps^{-2}\chi_{f^\sym}^2(\lambda\rho_1+(1-\lambda)\rho_2,
\lambda\sigma_1+(1-\lambda)\sigma_2) \\
&\qquad\le\lambda\eps^{-2}\chi_{f^\sym}^2(\rho_1,\sigma_1)
+(1-\lambda)\eps^{-2}\chi_{f^\sym}^2(\rho_2,\sigma_2) \\
&\qquad=\lambda I_{f^\sym}^1(\sigma_1,\sigma_1,H_1)
+(1-\lambda)I_{f^\sym}^1(\sigma_2,\sigma_2,H_2),
\end{align*}
which means that $(\sigma,H)\in\cD_n\times\bH_n^0\mapsto I_{f^\sym}^1(\sigma,\sigma,H)$ is
jointly convex, that is, condition (i$''$) in Section 2 holds with $\theta=1$ for $f^\sym$.
Hence (d6) follows by (b) and (c) of Theorem \ref{T-2.1}.\qed

\bigskip
When (d1) is replaced with the stronger condition that $\cI_A^f(X)$ is convex in
$A\in\bP_n$ for any fixed $X\in\bM_n$, we have the next theorem. The proof is similar to
(indeed, a bit simpler than) that of Theorem \ref{T-4.1}, so we omit it.

\begin{thm}\label{T-4.3}
Let $f>0$ be a function on $(0,\infty)$. Then the following conditions are equivalent:
\begin{itemize}
\item[\rm(d1$'$)] $A\in\bP_n\mapsto\cI_A^f(X)$ is convex for any fixed $X\in\bM_n$ and for
every $n\in\bN$;
\item[\rm(d2$'$)] $\rho\in\cD_n\mapsto\cI_\rho^f(X)$ is convex for any fixed $X\in\bM_n$
and for every $n\in\bN$;
\item[\rm(d5$'$)] $(x-1)^2/f(x)$ is operator convex on $(0,\infty)$;
\item[\rm(d6$'$)] $f$ is operator monotone on $(0,\infty)$.
\end{itemize}
\end{thm}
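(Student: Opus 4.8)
The plan is to follow the proof of Theorem \ref{T-4.1} almost verbatim, the one simplification being that dropping the Hermitian restriction on the argument lets us work directly with $h(x):=(x-1)^2/f(x)$ and avoid the harmonic symmetrization $h_\sym$ altogether. The hinge is identity \eqref{F-4.2}, which reads $\cI_A^f(X)=\<X,J_A^h(X)\>_\HS=S_h^X(A\|A)$ and is valid for every $X\in\bM_n$, not merely for Hermitian arguments. Thus (d1$'$) is exactly the statement that $A\mapsto S_h^X(A\|A)$ is convex for all $X$ and all $n$, and the whole problem is to couple this to operator convexity of $h$.

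First I would clear away the two cheap equivalences. Condition (d5$'$)\,$\Leftrightarrow$\,(d6$'$) is precisely the equivalence (1)\,$\Leftrightarrow$\,(4) of Lemma \ref{L-4.2} applied to $f$, since $(x-1)^2/f(x)=h$. For (d1$'$)\,$\Leftrightarrow$\,(d2$'$) I would invoke the degree-one homogeneity $\cI_{cA}^f(X)=c\,\cI_A^f(X)$, which is immediate from \eqref{F-4.2} because $J_{cA}^h=cJ_A^h$, together with the one-dimension-up embedding $A\mapsto\hat\rho:=A\oplus(1-\Tr A)\in\cD_{n+1}$ and $X\mapsto\hat X:=X\oplus0$, under which $\cI_{\hat\rho}^f(\hat X)=\cI_A^f(X)$. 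This is the same reduction carried out at the end of the proof of Theorem \ref{T-2.1}\,(a), transporting convexity over $\bP_n$ to convexity over $\cD_{n+1}$ and back.

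The substance is (d1$'$)\,$\Leftrightarrow$\,(d5$'$), for which I would use the $2\times2$ block trick with a strictly upper-triangular (hence non-Hermitian) block. For $A,B\in\bP_n$ and $X\in\bM_n$ set $\tilde A:=\bmatrix A&0\\0&B\endbmatrix\in\bP_{2n}$ and $\tilde X:=\bmatrix0&X\\0&0\endbmatrix\in\bM_{2n}$. Feeding the direct-sum decomposition \eqref{F-2.2} of $J_{\tilde A}^h$ into $\<\tilde X,J_{\tilde A}^h\tilde X\>_\HS$, only the cross term $J_{A,B}^h$ survives, so $\cI_{\tilde A}^f(\tilde X)=\<X,J_{A,B}^h X\>_\HS=S_h^X(A\|B)$; no symmetrization appears precisely because $\tilde X$ is not Hermitian, in contrast to \eqref{F-4.5}. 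Consequently (d1$'$) in dimension $2n$, restricted to the affine slice of block-diagonal $\tilde A$, gives joint convexity of $(A,B)\mapsto S_h^X(A\|B)$, while conversely that joint convexity, restricted to the diagonal $A=B$, returns (d1$'$). Applying Theorem \ref{T-3.3}, (c1)\,$\Leftrightarrow$\,(c4), with $h$ in place of $f$ (legitimate, since that theorem is stated for an arbitrary real function) identifies this joint convexity with operator convexity of $h=(x-1)^2/f(x)$, which is (d5$'$).

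I expect the only delicate point to be ensuring that the block reduction yields genuine joint convexity in $(A,B)$ and not merely convexity along the diagonal $A=B$: the forward implication depends on $\tilde X$ being one-sided so that the quadratic form isolates $J_{A,B}^h$, and on $(A,B)\mapsto\bmatrix A&0\\0&B\endbmatrix$ being affine so that convexity on $\bP_{2n}$ descends to $\bP_n\times\bP_n$. This is exactly parallel to the passage through \eqref{F-4.5} in Theorem \ref{T-4.1}, only cleaner, since replacing the Hermitian $\bmatrix0&X\\X^*&0\endbmatrix$ there by the one-sided $\tilde X$ here suppresses the averaging that produced $h_\sym$.
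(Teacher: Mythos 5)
Your proposal is correct and is precisely the argument the paper has in mind: the paper omits the proof of Theorem \ref{T-4.3}, describing it as ``similar to (indeed, a bit simpler than)'' that of Theorem \ref{T-4.1}, and your use of the one-sided block $\tilde X=\bigl[\begin{smallmatrix}0&X\\0&0\end{smallmatrix}\bigr]$ to isolate $J_{A,B}^h$ and reduce to Theorem \ref{T-3.3} for $h=(x-1)^2/f(x)$, bypassing $h_\sym$, is exactly that simplification. The auxiliary steps (homogeneity plus the $A\oplus(1-\Tr A)$ embedding for (d1$'$)\,$\Leftrightarrow$\,(d2$'$), and Lemma \ref{L-4.2} for (d5$'$)\,$\Leftrightarrow$\,(d6$'$)) also match the paper's template.
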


Although it is obvious from \eqref{F-4.2} that $\cI_A^f(X)$ is convex in $X\in\bM_n$ for
any fixed $A\in\bP_n$, the function $\cI_\rho^f(K)$ cannot be jointly convex in
$(\rho,K)\in\cD_n\times\bH_n$.

\begin{thm}
For any function $f>0$ on $(0,\infty)$, the function $\cI_\rho^f(K)$ is not jointly
convex in $(\rho,K)\in\cD_n\times\bH_n$ for some $n\in\bN$.
\end{thm}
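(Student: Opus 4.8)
The plan is to exploit the fact that $\cI_\rho^f(K)$ is a nonnegative quantity whose zero set is precisely the commuting locus $\{(\rho,K):[\rho,K]=0\}$, and then to observe that this locus is not convex. First I would record the elementary but crucial point that, since $f>0$ on $(0,\infty)$, the operator $J_\rho^f$ is positive and invertible on $(\bM_n,\<\cdot,\cdot\>_\HS)$, so $(J_\rho^f)^{-1}$ is a positive invertible operator on this Hilbert space. Consequently
\[
\cI_\rho^f(K)=\bigl\<i[\rho,K],(J_\rho^f)^{-1}(i[\rho,K])\bigr\>_\HS\ge0,
\]
with equality if and only if $i[\rho,K]=0$, i.e. $[\rho,K]=0$.

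Next I would argue by contradiction. Suppose that $\cI_\rho^f(K)$ were jointly convex on $\cD_n\times\bH_n$ for every $n$. Choosing any two pairs $(\rho_1,K_1)$ and $(\rho_2,K_2)$ at which the quantity vanishes, i.e. with $[\rho_i,K_i]=0$, joint convexity would force
\[
\cI_{\bar\rho}^f(\bar K)\le\tfrac12\bigl(\cI_{\rho_1}^f(K_1)+\cI_{\rho_2}^f(K_2)\bigr)=0,
\qquad\bar\rho:=\tfrac{\rho_1+\rho_2}2,\ \ \bar K:=\tfrac{K_1+K_2}2,
\]
whence $\cI_{\bar\rho}^f(\bar K)=0$, which by the first step means $[\bar\rho,\bar K]=0$. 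Thus joint convexity would imply that the commuting locus is closed under taking midpoints, and the whole argument reduces to showing that this is false.

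The heart of the matter, and the only real content of the proof, is therefore to produce commuting pairs whose midpoint fails to commute; this is where one must choose the example with care, though it turns out to be possible already for $n=2$. I would take
\[
\rho_1=\begin{bmatrix}3/4&0\\0&1/4\end{bmatrix},\quad
K_1=\begin{bmatrix}1&0\\0&0\end{bmatrix},\qquad
\rho_2=\tfrac12\begin{bmatrix}1&0\\0&1\end{bmatrix},\quad
K_2=\begin{bmatrix}0&1\\1&0\end{bmatrix}.
\]
Here $[\rho_1,K_1]=0$ because both matrices are diagonal, while $[\rho_2,K_2]=0$ because $\rho_2$ is a scalar matrix; and $\rho_1,\rho_2\in\cD_2$, $K_1,K_2\in\bH_2$. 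Their midpoints are $\bar\rho=\mathrm{diag}(5/8,3/8)$, a diagonal matrix with distinct eigenvalues, and $\bar K=\tfrac12\begin{bmatrix}1&1\\1&0\end{bmatrix}$, which has a nonzero off-diagonal entry; since a diagonal matrix with distinct eigenvalues commutes only with diagonal matrices, $[\bar\rho,\bar K]\ne0$. By the first step this gives $\cI_{\bar\rho}^f(\bar K)>0$, contradicting the displayed inequality. Hence $\cI_\rho^f(K)$ fails to be jointly convex already for $n=2$, as claimed.
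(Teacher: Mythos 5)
Your proof is correct, and it takes a genuinely different and more elementary route than the paper. The paper deduces the theorem from its Theorem \ref{T-3.5}: via the block-matrix identity \eqref{F-4.5}, joint convexity of $\cI_\rho^f(K)$ in $(\rho,K)$ would force joint convexity of the quasi-entropy $S_{h_\sym}^X(\rho\|\sigma)$ in all three variables $(\rho,\sigma,X)$, which Theorem \ref{T-3.5} rules out for any non-zero real function; that theorem in turn rests on the reduction machinery of Theorem \ref{T-2.1} and the Ando--Hiai characterization of operator monotone decreasing functions. You instead use only two facts: since $f>0$, the operator $(J_\rho^f)^{-1}$ is positive definite on $(\bM_n,\<\cdot,\cdot\>_\HS)$, so $\cI_\rho^f(K)\ge0$ with equality exactly when $[\rho,K]=0$; and the commuting locus $\{(\rho,K):[\rho,K]=0\}$ is not midpoint-convex, which your explicit $2\times2$ example (diagonal pair plus scalar-$\rho$ pair, whose midpoints are a diagonal matrix with distinct eigenvalues and a matrix with a non-zero off-diagonal entry) verifies. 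Joint convexity would propagate the value $0$ to the midpoint, a contradiction. What your argument buys is considerable: it is self-contained, needs no operator-monotonicity theory or any result from Sections 2--3, shows the failure already at $n=2$ (strengthening ``for some $n$''), and identifies the true obstruction as purely geometric --- the zero set of the skew information is non-convex --- independent of any analytic property of $f$. What the paper's route buys is the placement of this fact inside its unified framework: the non-convexity of $\cI_\rho^f$ is exhibited as a direct consequence of the stronger and more general statement that no quasi-entropy whatsoever is jointly convex in three variables.
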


\begin{proof}
Suppose that $I_A^f(K)$ is jointly convex in $(\rho,K)\in\cD_n\times\bH_n$ for every
$n\in\bN$. Then from \eqref{F-4.5} it must follow that $S_{h_\sym}^X(\rho\|\sigma)$ is
jointly convex in $(\rho,\sigma,X)\in\cD_n\times\cD_n\times\bM_n$ for every $n\in\bN$,
where $h_\sym$ is given in \eqref{F-4.1}. However this is impossible by Theorem
\ref{T-3.5}.
\end{proof}

\end{document}